\newtheorem{theorem}{Theorem}[section] 
\newtheorem{lemma}[theorem]{Lemma} 
\newtheorem{proposition}[theorem]{Proposition} 
\newtheorem{remark}[theorem]{Remark}
\newtheorem{definition}{Definition}
\newcommand{\beq}{\begin{equation}} 
\newcommand{\eeq}{\end{equation}} 
\newcommand{\beqa}{\begin{eqnarray}} 
\newcommand{\eeqa}{\end{eqnarray}} 
\newcommand{\beqas}{\begin{eqnarray*}} 
\newcommand{\eeqas}{\end{eqnarray*}} 
\newcommand{\ba}{\begin{array}} 
\newcommand{\ea}{\end{array}} 
\newcommand{\bi}{\begin{itemize}} 
\newcommand{\ei}{\end{itemize}} 
\newcommand{\gap}{\hspace*{2em}}
\def\eqnok#1{(\ref{#1})} 
\def\argmin{{\rm argmin}} 
\def\Argmin{{\rm Argmin}}
\def\vgap{\vspace*{.1in}}
\def\QED{\ifhmode\unskip\nobreak\fi\ifmmode\ifinner\else\hskip5pt\fi\fi
  \hbox{\hskip5pt\vrule width5pt height5pt depth1.5pt\hskip1pt}}
\def\card{{\rm Card}} 
\def\diag{{\rm diag}}
\def\eps{{\epsilon}}
\def\cA{{\cal A}} 
\def\cO{{\cal O}}
\def\cS{{\cal S}} 
\def\cN{{\cal N}}
\def\cX{{\cal X}}
\def\cXb{{\cX_{\hbeta}}}
\def\cU{{\cal U}} 
\def\cZ{{\cal Z}}
\def\bbeta{{\bar \beta}}
\def\hbeta{{\hat \beta}}
\def\hd{{\hat d}}
\def\hD{{\hat D}}
\def\lmax{{\lambda_{\max}}}
\def\td{{\tilde d}} 
\def\mF{{\mathcal F}} 
\def\mE{{\mathcal E}}
\def\tbeta{{\tilde \beta}}
\def\tx{{\tilde x}}
\def\tu{{\tilde u}}
\def\Arg{{\rm Arg}}
\def\talpha{{\tilde \alpha}}
\def\tbeta{{\tilde \beta}}
\def\tr{{\rm Tr}}
\def\xbuk{{X_{\hbeta_k}(U_k)}}
\def\xhbuk{{X_{\hbeta}(U_k)}}
\title{Smooth Optimization Approach for Sparse \\ Covariance Selection} 
\author{Zhaosong Lu%
\thanks{
Department of Mathematics, Simon Fraser University, Burnaby, BC, 
V5A 1S6, Canada. (email: {\tt zhaosong@sfu.ca}).
This author was supported in part by SFU President's Research Grant
and NSERC Discovery Grant.}
}
\date{July 3, 2007 (Revised: January, August, September 2008)}
\begin{document}

\maketitle

\begin{abstract} 
In this paper we first study a smooth optimization approach for solving a 
class of non-smooth {\it strictly} concave maximization problems whose 
objective functions admit smooth convex minimization 
reformulations. In particular, we apply Nesterov's smooth optimization 
technique \cite{Nest83-1,Nest05-1} to their dual counterparts that 
are smooth convex problems. It is shown that the resulting approach 
has $\cO(1/{\sqrt{\epsilon}})$ iteration complexity for finding an 
$\epsilon$-optimal solution to both primal and dual problems. We 
then discuss the application of this approach to sparse covariance 
selection that is approximately solved as a $l_1$-norm penalized maximum 
likelihood estimation problem, and also propose a variant of this approach 
which has substantially outperformed the latter one in our computational 
experiments. We finally compare the performance of these approaches with 
other first-order methods, namely, Nesterov's $\cO(1/\epsilon)$ smooth 
approximation scheme and block-coordinate descent method studied in 
\cite{DaOnEl06,FriHasTib07} for sparse covariance selection on a set 
of randomly generated instances. It shows that our smooth optimization 
approach substantially outperforms the first method above, and moreover, 
its variant substantially outperforms both methods above. 

\vskip14pt

\noindent {\bf Key words:} Covariance selection, non-smooth 
strictly concave maximization, smooth minimization
 
\vskip14pt

\noindent
{\bf AMS 2000 subject classification:}
90C22, 90C25, 90C47, 65K05, 62J10

\end{abstract}

\section{Introduction} \label{introduction}

In \cite{Nest83-1,Nest05-1}, Nesterov proposed an efficient smooth 
optimization method for solving convex programming problems of the form
\beq \label{CP}
\min \{ f(u) : u \in U \},
\eeq
where $f$ is a convex function with Lipschitz continuous gradient,  
and $U$ is a closed convex set. It is shown that his method has 
$\cO(1/{\sqrt{\epsilon}})$ iteration complexity bound, where 
$\epsilon>0$ is the absolute precision of the final objective 
function value. A proximal-point-type algorithm for \eqnok{CP} having 
the same complexity as above has also been proposed more recently by 
Auslender and Teboulle \cite{AuTe06-1}. 

Motivated by \cite{DaOnEl06}, we are particularly interested in studying 
the use of smooth optimization approach for solving a class of non-smooth 
{\it strictly} concave maximization problems whose objective functions admit 
smooth convex minimization reformulations in this paper. Our key idea is to 
apply Nesterov's smooth optimization technique \cite{Nest83-1, Nest05-1} to 
their dual counterparts that are smooth convex problems. It is shown that 
the resulting approach has $\cO(1/{\sqrt{\epsilon}})$ iteration complexity 
for finding an $\epsilon$-optimal solution to both primal and dual problems. 

One interesting application of the above approach is for sparse covariance 
selection. Given a set of random variables with Gaussian distribution for 
which the true covariance matrix is unknown, covariance selection is a procedure 
used to estimate true covariance from a sample covariance matrix by 
maximizing its likelihood while imposing a certain sparsity on the 
inverse of the covariance estimation (e.g., see \cite{Demp72}). Therefore, 
it can be applied to determine a robust estimate of the true variance matrix, 
and simultaneously discover the sparse structure in the underlying model. 
Despite its popularity in numerous real-world applications (e.g., see 
\cite{BaElDaNa06,DaOnEl06,YuLi07-1} and the references therein), sparse 
covariance selection itself is a challenging NP-hard combinatorial optimization 
problem. By an argument that is often used in regression techniques such as LASSO 
\cite{Tibs96}, Yuan and Lin \cite{YuLi07-1} and d'Aspremont et al.\ 
\cite{DaOnEl06} (see also \cite{BaElDaNa06}) showed that it can be 
approximately solved as a $l_1$-norm penalized maximum likelihood 
estimation problem. Moreover, the authors of \cite{DaOnEl06} studied 
two efficient first-order methods for solving this problem, that is, 
Nesterov's smooth approximation scheme and block-coordinate descent (BCD)
method. It was shown in \cite{DaOnEl06} that their first method has 
$\cO(1/\epsilon)$ iteration complexity for finding an $\epsilon$-optimal 
solution. For their second method, each iterate requires solving a box 
constrained quadratic programming, and it has a local linear 
convergence rate. However, its global iteration complexity for finding an 
$\epsilon$-optimal solution is theoretically unknown. After the first release 
of our paper, Friedman et al.\ \cite{FriHasTib07} studied a slight variant 
of the BCD method proposed in \cite{DaOnEl06}. At each iteration of their method, 
a coordinate descent approach is applied to solve a lasso ($l_1$-regularized) 
least-squares problem, which is the dual of the box constrained quadratic 
programming appearing in the BCD method \cite{DaOnEl06}. In contrast with these 
methods, the smooth optimization approach proposed in this paper has a more 
attractive iteration complexity that is $\cO(1/{\sqrt{\epsilon}})$ for finding 
an $\epsilon$-optimal solution. In addition, we propose a variant of the smooth 
optimization approach which has substantially outperformed the latter one in our 
computational experiments. We also compare the performance of our approaches 
with their methods for sparse covariance selection on a set of randomly generated 
instances. It shows that our smooth optimization approach substantially outperforms 
their first method above (i.e., Nesterov's smooth approximation scheme), and 
moreover, its variant substantially outperforms their methods 
\cite{DaOnEl06,FriHasTib07} mentioned above.    
                    
The paper is organized as follows. In Section \ref{smooth-appr}, we 
introduce a class of non-smooth concave maximization problems in which 
we are interested, and propose a smooth optimization approach to them. 
In Section \ref{cov-select}, we briefly introduce sparse covariance selection, 
and show that it can be approximately solved as a $l_1$-norm penalized 
maximum likelihood estimation problem. We also discuss the application of the 
smooth optimization approach for solving this problem, and propose a variant 
of this approach. In Section \ref{comp}, we compare the performance of our 
smooth optimization approach and its variant with two other first-order methods 
studied in \cite{DaOnEl06,FriHasTib07} for sparse covariance selection on a 
set of randomly generated instances. Finally, we present some concluding 
remarks in Section \ref{concl-remark}.

\subsection{Notation} \label{notation}
In this paper, all vector spaces are assumed to be finite dimensional. 
The space of symmetric $n \times n$ matrices will be denoted by $\cS^n$. 
If $X \in \cS^n$ is positive semidefinite, we write $X \succeq 0$. Also, we 
write $X \preceq Y$ to mean $Y-X \succeq 0$. The cone of positive semidefinite 
(resp., definite) matrices is denoted by $\cS^n_+$ (resp., $\cS^n_{++}$). Given
matrices $X$ and $Y$ in $\Re^{p \times q}$, the standard inner product is 
defined by $\langle X, Y \rangle := \tr (XY^T)$, where $\tr(\cdot)$ denotes 
the trace of a matrix. $\|\cdot\|$ denotes the Euclidean norm and its associated 
operator norm unless it is explicitly stated otherwise. The Frobenius norm of a 
real matrix $X$ is defined as $\|X\|_F := \sqrt{\tr(XX^T)}$. We denote by $e$ 
the vector of all ones, and by $I$ the identity matrix. Their dimensions should 
be clear from the context. For a real matrix $X$, we denote by $\card(X)$ the 
cardinality of $X$, that is, the number of nonzero entries of $X$, and denote by 
$|X|$ the absolute value of $X$, that is, $|X|_{ij}=|X_{ij}|$ for all $i,j$. The 
determinant and the minimal (resp., maximal) eigenvalue of a real symmetric matrix 
$X$ are denoted by $\det X$ and $\lambda_{\min}(X)$ (resp., $\lambda_{\max}(X)$), 
respectively. For a $n$-dimensional vector $w$, $\diag(w)$ denote the diagonal 
matrix whose $i$-th diagonal element is $w_i$ for $i=1,\ldots,n$. We denote by 
$\cZ_{+}$ the set of all nonnegative integers.      

Let the space $\mF$ be endowed with an arbitrary norm $\|\cdot\|$. The dual space 
of $\mF$, denoted by $\mF^*$, is the normed real vector space consisting of all 
linear functionals of $s: \mF \to \Re$, endowed with the dual norm $\|\cdot\|^*$ 
defined as 
\[
\|s\|^* := \max\limits_u \{\langle s, u \rangle: \ \|u\| \le 1 \},
\ \ \ \forall s\in \mF^*,
\]
where $\langle s,u\rangle := s(u)$ is the value of the linear functional 
$s$ at $u$. Finally, given an operator $\cA: \mF \to \mF^*$, we define 
\[
\cA [H, H] := \langle \cA H, H\rangle 
\]
for any $H \in \mF$.

\section{Smooth optimization approach}
\label{smooth-appr}

In this section, we consider a class of concave non-smooth maximization problems: 
\beq \label{concave-opt}
\max\limits_{x\in X} \, \{ g(x) := \min\limits_{u\in U} \phi(x,u)\}, 
\eeq  
where $X$ and $U$ are nonempty convex compact sets in finite-dimensional 
real vector spaces $\mE$ and $\mF$, respectively, and $\phi(x,u): X \times U 
\to \Re $ is a continuous function which is {\it strictly} concave in $x\in X$ for 
every fixed $u\in U$, and convex differentiable in $u\in U$ for every fixed 
$x\in X$. Therefore, for any $u\in U$, the function 
\beq \label{fu}
f(u) := \max\limits_{x\in X}\phi(x,u) 
\eeq 
is well-defined. We also easily conclude that $f(u)$ is 
convex differentiable on $U$, and its gradient is given by 
\beq \label{gu}
\nabla f(u) = \nabla_u \phi(x(u),u), \ \ \forall u\in U,
\eeq
where $x(u)$ denotes the unique solution of \eqnok{fu}.  

Let the space $\mF$ be endowed with an arbitrary norm $\|\cdot\|$. We further 
assume that $\nabla f(u)$ is Lipschitz continuous on $U$ with respect to $\|\cdot\|$, 
i.e., there exists some $L>0$ such that 
\[
\|\nabla f(u) - \nabla f(\tilde u)\|^* \le L \|u - \tilde u\|,   
\ \ \ \forall u, \tilde u \in U.
\]

Under the above assumptions, we easily observe that: i) problem \eqnok{concave-opt} 
and its dual, that is, 
\beq \label{dual-prob}
\min\limits_u \{f(u): \ u \in U\},
\eeq
are both solvable and have the same optimal value; and ii) the dual problem 
\eqnok{dual-prob} can be suitably solved by Nesterov's smooth minimization 
approach \cite{Nest83-1, Nest05-1}. 

Denote by $d(u)$ a prox-function of the set $U$. We assume that $d(u)$ is 
continuous and strongly convex on $U$ with modulus $\sigma > 0$. 
Let $u_0$ be the center of the set $U$ defined as 
\beq \label{u0}
u_0 = \arg\min\{d(u): \ u \in U\}.
\eeq
Without loss of generality assume that $d(u_0)=0$. We now describe Nesterov's 
smooth minimization approach \cite{Nest83-1, Nest05-1} for solving the dual 
problem \eqnok{dual-prob}, and we will show that it simultaneously solves 
the non-smooth concave maximization problem \eqnok{concave-opt}.

\gap

\noindent
\begin{minipage}[h]{6.6 in}
{\bf Smooth Minimization Algorithm:} \\ [5pt]
Let $u_0 \in U$ be given in \eqnok{u0}. Set $x_{-1}=0$ and $k=0$.
\begin{itemize}
\item[1)]
Compute $\nabla f(u_k)$ and $x(u_k)$. Set $x_{k} = \frac{k}{k+2} x_{k-1} 
+ \frac{2}{k+2} x(u_k)$.
\item[2)]
Find $u^{sd}_{k} \in \Argmin \left \{ \langle \nabla f(u_k), u-u_k \rangle +
\frac{L}2 \, \|u-u_k\|^2: \ u \in U \right \}$.
\item[3)]
Find $u^{ag}_{k} = \argmin \left \{ \frac{L}{\sigma}d(u)+\sum\limits_{i=0}^k
\frac{i+1}2[f(u_i) + \langle \nabla f(u_i), u-u_i \rangle]: \ u \in U \right \}$. 
\item[4)]
Set $u_{k+1} = \frac{2}{k+3} u^{ag}_{k} + \frac{k+1}{k+3} u^{sd}_{k}$.
\item[5)]
Set $k \leftarrow k+1$ and go to step 1). 
\end{itemize}
\noindent
{\bf end}
\end{minipage}
\vgap

The following property of the above algorithm is established in Theorem 2 
of Nesterov \cite{Nest05-1}.

\begin{theorem}\label{thm1}
Let the sequence $\{(u_k, u^{sd}_k) \}^{\infty}_{k=0} \subseteq U \times U$ 
be generated by the Smooth Minimization Algorithm. Then for any $k \ge 0$ 
we have 
\beq \label{p-conv}
\frac{(k+1)(k+2)}4 f(u^{sd}_k) \le \min\left\{\frac{L}{\sigma}d(u)+
\sum\limits_{i=0}^k \frac{i+1}2 [f(u_i)+\langle \nabla f(u_i), u-u_i \rangle]: 
\ u \in U \right\}.
\eeq  

\end{theorem}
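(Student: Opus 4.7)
The plan is to prove the estimate by induction on $k$, using the standard ``estimate sequence'' technique from Nesterov's acceleration. Let me introduce the abbreviations $\alpha_i := (i+1)/2$, $A_k := \sum_{i=0}^k \alpha_i = (k+1)(k+2)/4$, and
\[
\psi_k(u) := \frac{L}{\sigma}\, d(u) + \sum_{i=0}^k \alpha_i \bigl[f(u_i) + \langle \nabla f(u_i), u - u_i\rangle\bigr].
\]
The claim becomes $A_k f(u^{sd}_k) \le \min_{u\in U}\psi_k(u)$. Two structural facts drive the argument: (a) since $d$ is $\sigma$-strongly convex, $\psi_k$ is $L$-strongly convex in $u$, and its minimizer over $U$ is exactly $u^{ag}_k$ by step 3 of the algorithm; (b) step 4 can be rewritten as $u_{k+1} = \tfrac{A_k}{A_{k+1}}u^{sd}_k + \tfrac{\alpha_{k+1}}{A_{k+1}}u^{ag}_k$, since $A_k/A_{k+1} = (k+1)/(k+3)$ and $\alpha_{k+1}/A_{k+1} = 2/(k+3)$.

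The base case $k=0$ follows from the Lipschitz descent inequality $f(u^{sd}_0) \le f(u_0) + \langle \nabla f(u_0), u^{sd}_0 - u_0\rangle + \tfrac{L}{2}\|u^{sd}_0 - u_0\|^2$ applied at the argmin defining $u^{sd}_0$, together with the strong-convexity bound $\tfrac{L}{\sigma}d(u) \ge \tfrac{L}{2}\|u-u_0\|^2$ (using that $u_0$ minimizes $d$ on $U$ and $d(u_0)=0$).

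For the induction step, I would write
\[
\min_{u\in U}\psi_{k+1}(u) \;\ge\; \min_{u\in U}\Bigl\{\psi_k(u^{ag}_k) + \tfrac{L}{2}\|u-u^{ag}_k\|^2 + \alpha_{k+1}\bigl[f(u_{k+1}) + \langle \nabla f(u_{k+1}), u-u_{k+1}\rangle\bigr]\Bigr\}
\]
by strong convexity of $\psi_k$ at its minimizer. Using the inductive hypothesis $\psi_k(u^{ag}_k)\ge A_k f(u^{sd}_k)$ and the convexity estimate $f(u^{sd}_k) \ge f(u_{k+1}) + \langle \nabla f(u_{k+1}), u^{sd}_k - u_{k+1}\rangle$, the right-hand side is at least
\[
A_{k+1} f(u_{k+1}) + A_k \langle \nabla f(u_{k+1}), u^{sd}_k - u_{k+1}\rangle + \min_{u\in U}\Bigl\{\tfrac{L}{2}\|u-u^{ag}_k\|^2 + \alpha_{k+1}\langle \nabla f(u_{k+1}), u-u_{k+1}\rangle\Bigr\}.
\]
Substituting $u_{k+1} = \tfrac{A_k}{A_{k+1}}u^{sd}_k + \tfrac{\alpha_{k+1}}{A_{k+1}}u^{ag}_k$ into the constant terms $A_k\langle \nabla f(u_{k+1}), u^{sd}_k - u_{k+1}\rangle$ and $\alpha_{k+1}\langle \nabla f(u_{k+1}), u^{ag}_k - u_{k+1}\rangle$ shows that they are negatives of each other, so the linear residual collapses to $\alpha_{k+1}\langle \nabla f(u_{k+1}), u - u^{ag}_k\rangle$ inside the min. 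On the other hand, the Lipschitz inequality at $u^{sd}_{k+1}$ (step 2) combined with the substitution $v = \tfrac{A_k}{A_{k+1}}u^{sd}_k + \tfrac{\alpha_{k+1}}{A_{k+1}}u$ (which lies in $U$ by convexity) gives
\[
A_{k+1} f(u^{sd}_{k+1}) \;\le\; A_{k+1} f(u_{k+1}) + \min_{u\in U}\Bigl\{\alpha_{k+1}\langle \nabla f(u_{k+1}), u-u^{ag}_k\rangle + \tfrac{L\alpha_{k+1}^2}{2A_{k+1}}\|u-u^{ag}_k\|^2\Bigr\},
\]
where the quadratic coefficient used the identity $\tfrac{A_k}{A_{k+1}}u^{sd}_k + \tfrac{\alpha_{k+1}}{A_{k+1}}u - u_{k+1} = \tfrac{\alpha_{k+1}}{A_{k+1}}(u-u^{ag}_k)$. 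The inequality $\alpha_{k+1}^2 \le A_{k+1}$, i.e.\ $(k+2)^2 \le (k+2)(k+3)$, then dominates this upper bound by the corresponding lower bound above, closing the induction.

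The main obstacle is the bookkeeping in the middle step: one must recognize that steps 2, 3, 4 of the algorithm are tuned precisely so that (i) the convex combination defining $u_{k+1}$ makes the linear terms cancel, and (ii) the ratio $\alpha_{k+1}^2/A_{k+1}$ is at most one. Once those two identities are in place, the proof is essentially a direct comparison of the two minimization problems with the same quadratic and linear structure around $u^{ag}_k$.
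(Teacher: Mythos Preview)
Your argument is correct. It is the standard estimate-sequence induction due to Nesterov: the key pointwise chain
\[
A_{k+1} f(u^{sd}_{k+1}) \;\le\; A_{k+1} f(u_{k+1}) + \alpha_{k+1}\langle \nabla f(u_{k+1}), u - u^{ag}_k\rangle + \tfrac{L\alpha_{k+1}^2}{2A_{k+1}}\|u-u^{ag}_k\|^2 \;\le\; \psi_{k+1}(u)
\]
holds for every $u\in U$ (using $\alpha_{k+1}^2\le A_{k+1}$), and taking the minimum over $u$ closes the induction. Your identification of step~4 with the convex combination $u_{k+1}=\tfrac{A_k}{A_{k+1}}u^{sd}_k+\tfrac{\alpha_{k+1}}{A_{k+1}}u^{ag}_k$ and the resulting cancellation of the linear terms are exactly right.

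As for comparison with the paper: the paper does not actually prove Theorem~\ref{thm1}. It simply quotes the result from Theorem~2 of Nesterov~\cite{Nest05-1} and uses it as a black box for the subsequent primal--dual analysis (Theorem~\ref{mtm-concave}). What you have written is, up to notation, Nesterov's own proof of that theorem, so there is no divergence in approach---you have supplied the argument that the paper elected to cite rather than reproduce.
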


We are ready to establish the main convergence result of the Smooth Minimization 
Algorithm for solving the non-smooth concave maximization problem \eqnok{concave-opt} 
and its dual \eqnok{dual-prob}. Its proof is a generalization of the one 
given in a more special context in \cite{Nest05-1}.

\begin{theorem} \label{mtm-concave}
After $k$ iterations, the Smooth Minimization Algorithm generates a pair of approximate 
solutions $(u^{sd}_k, x_k)$ to problem \eqnok{concave-opt} and its dual \eqnok{dual-prob}, 
respectively, which satisfy the following inequality:
\beq \label{converg-ineq}
0 \le f(u^{sd}_k) - g(x_k) \le \frac{4LD}{\sigma(k+1)(k+2)}. 
\eeq 
Thus if the termination criterion $f(u^{sd}_k) - g(x_k) \le \epsilon$ is applied, 
the iteration complexity of finding an $\epsilon$-optimal solution to problem 
\eqnok{concave-opt} and its dual \eqnok{dual-prob} by the Smooth Minimization 
Algorithm does not exceed $2\sqrt{\frac{LD}{\sigma \epsilon}}$, where 
\beq \label{D}
D = \max \{d(u): \ u \in U\}.
\eeq
\end{theorem}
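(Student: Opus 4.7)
The plan is to extract a tractable upper bound for the right-hand side of \eqnok{p-conv} by exploiting the structure of $f$ and $\nabla f$ in terms of $\phi$, and then to recognize that bound as $\frac{(k+1)(k+2)}{4}g(x_k)$ plus an $\cO(LD/\sigma)$ remainder.

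First, I would unfold the update rule $x_k = \frac{k}{k+2}x_{k-1} + \frac{2}{k+2}x(u_k)$ starting from $x_{-1}=0$, and verify by induction that
\[
x_k \;=\; \sum_{i=0}^k \frac{2(i+1)}{(k+1)(k+2)}\, x(u_i),
\]
so that $x_k$ is a convex combination of the maximizers $x(u_i)$ (and in particular $x_k\in X$). Next I would use $f(u_i)=\phi(x(u_i),u_i)$ together with \eqnok{gu} to rewrite each linearization term, and then apply convexity of $\phi(x(u_i),\cdot)$ in $u$ to obtain
\[
f(u_i) + \langle \nabla f(u_i), u-u_i\rangle \;\le\; \phi(x(u_i),u), \qquad \forall\,u\in U.
\]

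Summing these inequalities with weights $(i+1)/2$ and applying concavity of $\phi(\cdot,u)$ in $x$ (Jensen's inequality with the weights identified above), I obtain
\[
\sum_{i=0}^k \frac{i+1}{2}\bigl[f(u_i)+\langle \nabla f(u_i), u-u_i\rangle\bigr]
\;\le\; \frac{(k+1)(k+2)}{4}\,\phi(x_k,u).
\]
Plugging this into \eqnok{p-conv}, using $d(u)\le D$ by \eqnok{D}, and minimizing over $u\in U$ yields
\[
\frac{(k+1)(k+2)}{4}\, f(u^{sd}_k) \;\le\; \frac{LD}{\sigma} + \frac{(k+1)(k+2)}{4}\,\min_{u\in U}\phi(x_k,u)
\;=\; \frac{LD}{\sigma} + \frac{(k+1)(k+2)}{4}\, g(x_k),
\]
which, after rearrangement, gives the claimed bound $f(u^{sd}_k)-g(x_k)\le \frac{4LD}{\sigma(k+1)(k+2)}$. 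The nonnegativity $f(u^{sd}_k)-g(x_k)\ge 0$ is weak duality: $f(u)\ge \phi(x,u)\ge g(x)$ for all feasible $x,u$.

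Finally, for the complexity bound I would solve $\frac{4LD}{\sigma(k+1)(k+2)}\le \epsilon$, noting $(k+1)(k+2)\ge (k+1)^2$, so it suffices that $k+1\ge 2\sqrt{LD/(\sigma\epsilon)}$, giving the $2\sqrt{LD/(\sigma\epsilon)}$ iteration count. The only real subtlety I anticipate is getting the weights in the convex combination defining $x_k$ to line up exactly with the weights $(i+1)/2$ appearing in Theorem \ref{thm1}; this is precisely why the scheme uses the factors $\frac{k}{k+2}$ and $\frac{2}{k+2}$, and the induction for $x_k$ is the place where this alignment must be checked carefully.
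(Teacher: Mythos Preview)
Your proposal is correct and follows essentially the same argument as the paper's proof: both verify the convex-combination formula for $x_k$, rewrite $f(u_i)+\langle\nabla f(u_i),u-u_i\rangle$ via $\phi$, use convexity in $u$ and then concavity in $x$ (Jensen) to bound the weighted sum by $\frac{(k+1)(k+2)}{4}\phi(x_k,u)$, and finally combine with \eqnok{p-conv} and $d(u)\le D$. Your explicit mention of weak duality for the lower bound and of the inequality $(k+1)(k+2)\ge (k+1)^2$ for the iteration count are details the paper leaves implicit, but otherwise the two proofs coincide.
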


\begin{proof}
In view of \eqnok{fu}, \eqnok{gu} and the notation $x(u)$, we have 
\beq \label{eq1}
f(u_i) + \langle \nabla f(u_i), u-u_i \rangle = \phi(x(u_i),u_i) + 
\langle \nabla_u \phi(x(u_i),u_i), u-u_i \rangle.
\eeq
Invoking the fact that the function $\phi(x, \cdot)$ is convex on $U$ for 
every fixed $x\in X$, we obtain  
\beq \label{convex-ineq}
\phi(x(u_i),u_i) + \langle \nabla_u \phi(x(u_i),u_i), u-u_i \rangle \le 
\phi(x(u_i),u).
\eeq 
Notice that $x_{-1}=0$, and $x_{k} = \frac{k}{k+2} x_{k-1} + 
\frac{2}{k+2} x(u_k)$ for any $k \ge 0$, which imply      
\beq \label{xk}
x_k = \sum\limits_{i=0}^k \frac{2(i+1)}{(k+1)(k+2)}x(u_i).
\eeq
Using \eqnok{eq1}, \eqnok{convex-ineq}, \eqnok{xk} and the fact that the 
function $\phi(\cdot,u)$ is concave on $X$ for every fixed $u\in U$, we have
\beqas
\sum\limits_{i=0}^k (i+1)[f(u_i) + \langle \nabla f(u_i), u-u_i \rangle ] 
& \le & \sum\limits_{i=0}^k (i+1) \phi(x(u_i),u) \\ [5pt]
& \le & \frac12(k+1)(k+2)\phi(x_k,u)
\eeqas 
for all $u\in U$. It follows from this relation, \eqnok{p-conv}, \eqnok{D} 
and \eqnok{concave-opt} that 
\beqas 
f(u^{sd}_k) & \le & \frac{4LD}{\sigma(k+1)(k+2)} + \min\limits_u 
\left\{\sum\limits_{i=0}^k \frac{2(i+1)}{(k+1)(k+2)} [f(u_i)+\langle 
\nabla f(u_i), u-u_i \rangle]: \ u \in U \right\} \\ [5pt]
& \le &  \frac{4LD}{\sigma(k+1)(k+2)} + \min_{u \in U} \phi(x_k, u) 
= \frac{4LD}{\sigma(k+1)(k+2)} + g(x_k),
\eeqas
and hence the inequality \eqnok{converg-ineq} holds. The remaining 
conclusion directly follows from \eqnok{converg-ineq}.
\end{proof}

\vgap

\begin{remark}
We shall mention that Nesterov \cite{Nest05-2} developed the excessive gap 
technique for solving problem \eqnok{concave-opt} and its dual \eqnok{dual-prob} 
in a special context, which enjoys the same iteration complexity as the 
Smooth Minimization Algorithm described above. In addition, it is not hard 
to observe that the technique proposed in \cite{Nest05-2} can be extended
to solve problem \eqnok{concave-opt} and its dual \eqnok{dual-prob} in 
the aforementioned general framework provided that the subproblem 
\beq \label{prox-subprob}
\min\limits_{u \in U} \, \phi(x,u) + \mu d(u)
\eeq
can be suitably solved for any given $\mu>0$ and $x\in X$. The computation 
of each iterate of Nesterov's excessive gap technique \cite{Nest05-2} is 
similar to that of the Smooth Minimization Algorithm except that the 
former method requires solving a prox subproblem in the form of 
\eqnok{prox-subprob}, but the latter one needs to solve the prox subproblem 
described in step 3) above. When the function $\phi(x,\cdot)$ is affine for 
every fixed $x\in X$, these two prox subproblems have the same form, and thus 
the computational cost of Nesterov's excessive gap technique 
\cite{Nest05-2} is almost same as that of the Smooth Minimization Algorithm;
however, for a more general function $\phi(\cdot,\cdot)$, the computational 
cost of the former method can be more expensive than that of 
the latter method. 
\end{remark}

\vgap

The following results will be used to develop a variant of the Smooth 
Minimization Algorithm for sparse covariance selection in Subsection 
\ref{variant}.

\begin{lemma} \label{unique}
Problem \eqnok{concave-opt} has a unique optimal solution, denoted by $x^*$. 
Moreover, for any $u^* \in \Argmin\{f(u):\ u\in U\}$, we have 
\beq \label{opt-rel}
x^* = \arg\max\limits_{x\in X} \phi(x, u^*).
\eeq  
\end{lemma}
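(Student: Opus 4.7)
The plan is to exploit the saddle-point structure coming from strong duality between \eqref{concave-opt} and \eqref{dual-prob}. The excerpt already records that both problems are solvable with a common optimal value, so we get for free that a saddle point exists: any pair $(x^\dagger, u^*)$ with $x^\dagger$ primal optimal and $u^*$ dual optimal will sit at equal primal and dual values.

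Given any primal optimum $x^\dagger$ and any $u^* \in \Argmin\{f(u):u\in U\}$, I would write down the sandwich
\[
g(x^\dagger) \;=\; \min_{u\in U}\phi(x^\dagger, u) \;\le\; \phi(x^\dagger, u^*) \;\le\; \max_{x\in X}\phi(x, u^*) \;=\; f(u^*) \;=\; g(x^\dagger),
\]
where the last equality uses strong duality. Since all terms are equal, the rightmost inequality is an equality, which means $x^\dagger \in \arg\max_{x\in X}\phi(x, u^*)$. Because $\phi(\cdot, u^*)$ is strictly concave on the convex compact set $X$, its maximizer is unique; call it $x^*$. Thus $x^\dagger = x^*$, which is exactly the identity \eqref{opt-rel}.

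The uniqueness of the primal optimum now falls out: the argument in the previous paragraph shows that every primal optimum $x^\dagger$ must coincide with the unique maximizer of $\phi(\cdot, u^*)$ for a (any) fixed $u^*$, and this maximizer does not depend on the choice of $x^\dagger$. Hence the optimal set of \eqref{concave-opt} is the singleton $\{x^*\}$, and simultaneously $x^* = \arg\max_{x\in X}\phi(x, u^*)$ for every dual optimum $u^*$.

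I do not anticipate a real obstacle here; the only thing to be careful about is that strict concavity of $\phi(\cdot, u)$ is used pointwise in $u$, so one must apply it at a specific dual optimum $u^*$ rather than trying to deduce strict concavity of $g$ directly (the pointwise minimum of strictly concave functions need not be strictly concave, so a direct strict-concavity argument on $g$ would not be cleanest).
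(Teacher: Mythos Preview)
Your argument is correct and is a genuine alternative to the paper's route. The paper proceeds in the opposite order: it first shows directly that $g$ is strictly concave on $X$ (for any $x^1\neq x^2$ and $t\in(0,1)$ it picks $\tilde u\in U$ attaining the minimum at $tx^1+(1-t)x^2$, then applies strict concavity of $\phi(\cdot,\tilde u)$), deduces uniqueness of the primal optimizer $x^*$, and only then invokes the saddle-point inequalities to get \eqref{opt-rel}. Your approach reverses this: you use the strong-duality sandwich to force any primal optimum to coincide with the unique maximizer of $\phi(\cdot,u^*)$, obtaining both conclusions at once. Your path is a bit slicker since it avoids proving strict concavity of $g$. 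One small correction to your closing remark, though: in the present setting the pointwise minimum \emph{is} strictly concave, precisely because compactness of $U$ and continuity of $\phi$ guarantee the minimum is attained at some $\tilde u$, and strict concavity of $\phi(\cdot,\tilde u)$ then gives the strict inequality for $g$. So the ``direct strict-concavity argument on $g$'' does work cleanly here---it is exactly what the paper does---even if your saddle-point shortcut is arguably more economical.
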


\begin{proof}
We clearly know that problem \eqnok{concave-opt} has an optimal solution. 
To prove its uniqueness, it suffices to show that $g(x)$ is strictly 
concave on $X$. Indeed, since $X \times U$ is a convex compact set and 
$\phi(x, u)$ is continuous on $X \times U$, it follows that for any 
$t\in (0,1)$, $x^1 \neq x^2 \in X$, there exists some $\tu \in U$ such 
that   
\[
\phi(tx^1+(1-t)x^2, \tu) = \min_{u\in U}\phi(tx^1+(1-t)x^2, u).
\] 
Recall that $\phi(\cdot,u)$ is strictly concave on $X$ for every fixed $u\in U$. 
Therefore, we have 
\[
\ba{lcl}
\phi(tx^1+(1-t)x^2, \tu) & > & t \phi(x^1, \tu) + (1-t) \phi(x^2, \tu), \\ [4pt]
& \ge & t \min\limits_{u\in U}\phi(x^1, u) + (1-t) \min\limits_{u\in U}\phi(x^2, u),
\ea
\]
which together with \eqnok{concave-opt} implies that 
\[
g(tx^1+(1-t)x^2) > t g(x^1) + (1-t) g(x^2)
\]
for any $t\in (0,1)$, $x^1 \neq x^2 \in X$, and hence, $g(x)$ is strictly 
concave on $X$ as desired. 

Note that $x^*$ is the optimal solution of problem \eqnok{concave-opt}. We clearly 
know that for any $u^* \in \Argmin \{f(u):\ u\in U\}$, $(u^*, x^*)$ is a saddle point 
for problem \eqnok{concave-opt}, that is, 
\[
\phi(x^*, u) \ge \phi(x^*, u^*) \ge \phi(x, u^*), \ \ \ \forall (x,u) \in X \times U,
\]
and hence, we have 
\[
x^* \in \Arg\max\limits_{x\in X} \phi(x, u^*).
\]
It together with the fact that $\phi(\cdot,u^*)$ is strictly concave on $X$, immediately 
yields \eqnok{opt-rel}.  
\end{proof} 

\vgap

\begin{theorem} \label{prim-conv}
Let $x^*$ be the unique optimal solution of \eqnok{concave-opt}, and $f^*$ be the optimal value 
of problems \eqnok{concave-opt} and \eqnok{dual-prob}. Assume that the sequences $\{u_k\}^{\infty}_{k=0}$ 
and $\{x(u_k)\}^{\infty}_{k=0}$ are generated by the Smooth Minimization Algorithm. Then the 
following statements hold:
\bi
\item[1)] $f(u_k) \to f^*$, $x(u_k) \to x^*$ as $k \to \infty$;
\item[2)] $f(u_k) - g(x(u_k)) \to 0$ as $k \to \infty$.
\ei  
\end{theorem}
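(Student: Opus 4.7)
The plan is to bootstrap from Theorem \ref{mtm-concave}, which guarantees $f(u^{sd}_k)\to f^*$ and $g(x_k)\to f^*$, to the corresponding statements about $u_k$ and $x(u_k)$. The connecting idea is that the algorithm's update $u_{k+1} = \frac{2}{k+3}u^{ag}_k + \frac{k+1}{k+3}u^{sd}_k$ forces $u_{k+1}$ and $u^{sd}_k$ to collapse together as $k\to\infty$, while Lemma \ref{unique} lets us bypass the possible non-uniqueness of dual optima when handling $x(u_k)$.

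First I would show $f(u_k)\to f^*$. From the update rule, $\|u_{k+1}-u^{sd}_k\| = \frac{2}{k+3}\|u^{ag}_k-u^{sd}_k\|$, which tends to $0$ since $U$ is compact (hence bounded). From \eqnok{converg-ineq} combined with weak duality $g(x_k)\le f^*\le f(u^{sd}_k)$, we get $f(u^{sd}_k)\to f^*$. Since $f$ is continuous on the compact set $U$, it is uniformly continuous there, so $|f(u_{k+1})-f(u^{sd}_k)|\to 0$ and therefore $f(u_k)\to f^*$.

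Next, for $x(u_k)\to x^*$, I would first verify that $u\mapsto x(u)$ is continuous on $U$: if $u_n\to u^*$ and $\bar x$ is any subsequential limit of $\{x(u_n)\}\subset X$, passing to the limit in $\phi(x(u_n),u_n)\ge \phi(x,u_n)$ via continuity of $\phi$ shows that $\bar x$ maximizes $\phi(\cdot,u^*)$; strict concavity forces $\bar x = x(u^*)$, so every subsequential limit equals $x(u^*)$ and continuity follows. Now take any convergent subsequence $u_{k_j}\to\bar u$. Continuity of $f$ together with $f(u_k)\to f^*$ give $\bar u\in\Argmin\{f(u):u\in U\}$, so Lemma \ref{unique} yields $x(\bar u)=x^*$, hence $x(u_{k_j})\to x^*$ by continuity of $x(\cdot)$. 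Every convergent subsequence of $\{x(u_k)\}\subset X$ therefore has the same limit $x^*$, and since $X$ is compact, $x(u_k)\to x^*$. This settles part (1).

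For part (2), $g$ is continuous on $X$ (by continuity of $\phi$ and compactness of $U$), so $g(x(u_k))\to g(x^*)=f^*$; combined with $f(u_k)\to f^*$ from part (1), this gives $f(u_k)-g(x(u_k))\to 0$. The main obstacle is in the middle paragraph: the dual minimizer need not be unique, so one cannot simply talk about convergence of $u_k$. Lemma \ref{unique} is crucial here because it ensures that every dual optimum yields the same primal solution $x^*$, which is exactly what lets the subsequence argument deliver $x(u_k)\to x^*$ without requiring convergence of $u_k$ itself.
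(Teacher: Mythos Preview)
Your proposal is correct and follows essentially the same route as the paper: deduce $u_{k+1}-u^{sd}_k\to 0$ from the convex-combination update and compactness of $U$, transfer $f(u^{sd}_k)\to f^*$ to $f(u_k)\to f^*$ via uniform continuity, and then use a subsequence argument together with Lemma~\ref{unique} to conclude $x(u_k)\to x^*$, with part~(2) following from continuity of $g$. The only packaging difference is that you isolate the continuity of $u\mapsto x(u)$ as a separate intermediate fact, whereas the paper works directly with a convergent subsequence of $\{x(u_k)\}$ and passes to a further subsequence of $\{u_k\}$; both routes rely on the same strict-concavity / Lemma~\ref{unique} mechanism.
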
  

\begin{proof}
Recall from the Smooth Minimization Algorithm that 
\[
u_{k+1} =  \left(2 u^{ag}_{k}+(k+1) u^{sd}_{k}\right)/(k+3),
\ \ \forall k \ge 0.
\]  
Since $u^{sd}_{k}, u^{ag}_{k} \in U$ for $\forall k\ge 0$, and $U$ is a compact 
set, we have $u_{k+1}-u^{sd}_{k} \to 0$ as $k \to \infty$. Notice that $f(u)$ is 
continuous on the compact set $U$, and hence, it is uniformly continuous on $U$. 
Then we further have $f(u_{k+1})-f(u^{sd}_{k}) \to 0$ as 
$k \to \infty$. Also, it follows from Theorem \ref{mtm-concave} that 
$f(u^{sd}_{k}) \to f^*$ as $k \to \infty$. Therefore, we conclude that $f(u_k) \to f^*$ 
as $k \to \infty$. 

Note that $X$ is a compact set, and $x(u_k) \subseteq X$ for $\forall k \ge 0$.  
To prove that $x(u_k) \to x^*$ as $k \to \infty$, it suffices to show that every convergent 
subsequence of $\{x(u_k)\}^{\infty}_{k=0}$ converges to $x^*$ as $k \to \infty$. Indeed, 
assume that $\{x(u_{n_k})\}^{\infty}_{k=0}$ is an arbitrary convergent subsequence, and 
$x(u_{n_k}) \to \tx^*$ as $k \to \infty$ for some $\tx^*\in X$. Without 
loss generality, assume that the sequence $\{u_{n_k}\}^{\infty}_{k=0} \to \tu^*$ 
as $k \to \infty$ for some $\tu^* \in U$ (otherwise, one can consider any convergent 
subsequence of $\{u_{n_k}\}^{\infty}_{k=0}$). Using the result that $f(u_k) \to f^*$, 
we obtain that 
\[
\phi\left(x(u_{n_k}), u_{n_k}\right) = f(u_{n_k}) \to f^*, \ \ \ \mbox{as} \ k \to \infty.   
\] 
Upon letting $k \to \infty$ and using the continuity of $\phi(\cdot, \cdot)$, we have 
$\phi(\tx^*,\tu^*) = f(\tu^*) = f^*$. Hence, it follows that
\[ 
\tu^* \in \Arg\min\limits_{u\in U} f(u), \ \ \ \tx^* =\arg\max\limits_{x\in X} \phi(x, \tu^*),
\]
which together with Lemma \ref{unique} implies that $\tx^*=x^*$. Hence as desired,  
$x(u_{n_k}) \to x^*$ as $k \to \infty$. 

As shown in Lemma \ref{unique}, the function $g(x)$ is continuous on $X$. This result 
together with statement 1) immediately implies that statement 2) holds. 
\end{proof}

\section{Sparse covariance selection}
\label{cov-select}

In this section, we discuss the application of the smooth optimization approach 
proposed in Section \ref{smooth-appr} to sparse covariance selection. More 
specifically, we briefly introduce sparse covariance selection in Subsection 
\ref{intro-cov}, and show that it can be approximately solved as a $l_1$-norm 
penalized maximum likelihood estimation problem in Subsection \ref{reform}. In 
Subsection \ref{appl}, We address some implementation details of the smooth 
optimization approach for solving this problem, and propose a variant of this 
approach in Subsection \ref{variant}.

\subsection{Introduction of sparse covariance selection}
\label{intro-cov}

In this subsection, we briefly introduce sparse covariance selection. For more 
details, see d'Aspremont et al.\ \cite{DaOnEl06} and the references therein.
 
Given $n$ variables with a Gaussian distribution $\cN(0, C)$ for which the true 
covariance matrix $C$ is unknown, we are interested in estimating $C$ from a 
sample covariance matrix $\Sigma$ by maximizing its likelihood while 
imposing a certain number of components in the inverse of the estimation of 
$C$ to zero. This problem is commonly known as {\it sparse covariance selection} 
(see \cite{Demp72}). Since zeros in the inverse of covariance matrix correspond 
to conditional independence in the model, sparse covariance selection can be used 
to determine a robust estimate of the covariance matrix, and simultaneously 
discover the sparse structure in the underlying graphical model. 

Several approaches have been proposed for sparse covariance selection in literature. 
For example, Bilmes \cite{Bilmes00} proposed a method based on choosing statistical 
dependencies according to conditional mutual information computed from training 
data. The recent works \cite{JoCaDo04,DoWe04} involve identifying the Gaussian graphical 
models that are best supported by the data and any available prior information 
on the covariance matrix. 
Given a sample covariance matrix $\Sigma \in \cS^n_+$, d'Aspremont et al.\ 
\cite{DaOnEl06} recently formulated sparse covariance selection as the following 
estimation problem:
\beq \label{card-prob}
\begin{array}{ll}
\max\limits_X & \log\det X - \langle \Sigma, X \rangle - \rho \card(X)\\ 
\mbox{s.t.} & \talpha I \preceq X \preceq \tbeta I,
\end{array}   
\eeq 
where $\rho>0$ is a parameter controlling the trade-off between likelihood and 
cardinality, and $0 \le \talpha < \tbeta \le \infty$ are the fixed bounds on the 
eigenvalues of the solution. For some specific choices of $\rho$, 
the formulation \eqnok{card-prob} has been used for model selection in 
\cite{Akaike73,BuAn04}, and applied to speech recognition and gene network 
analysis (see \cite{Bilmes00, DoHa04}). 

Note that the estimation problem \eqnok{card-prob} itself is a NP-hard combinatorial 
problem because of the penalty term $\card(X)$. To overcome the computational 
difficulty, d'Aspremont et al.\ \cite{DaOnEl06} used an argument that is often 
used in regression techniques (e.g., see \cite{Tibs96,ChDoSa98,DonTan05}), where 
sparsity of the solution is concerned, to relax $\card(X)$ to $e^T |X| e$, and 
obtained the following $l_1$-norm penalized maximum likelihood estimation 
problem:
\beq \label{relax}
\begin{array}{ll}
\max\limits_X & \log\det X - \langle \Sigma, X \rangle - \rho e^T |X| e\\ 
\mbox{s.t.} & \talpha I \preceq X \preceq \tbeta I,
\end{array}   
\eeq
Recently, Yuan and Lin \cite{YuLi07-1} proposed a similar estimation problem for 
sparse covariance selection given as follows:
\beq \label{relax-yl}
\begin{array}{ll}
\max\limits_X & \log\det X - \langle \Sigma, X \rangle - \rho \sum\limits_{i\neq j} |X_{ij}|\\ 
\mbox{s.t.} & \talpha I \preceq X \preceq \tbeta I,
\end{array}   
\eeq
with $\talpha=0$ and $\tbeta=\infty$. They showed that problem \eqnok{relax-yl} can be 
suitably solved by the interior point algorithm developed in Vandenberghe et al.\ \cite{VaBoWu98}. 
A few other approaches have also been studied for sparse covariance selection by solving some 
related maximum likelihood estimation problems in literature. For example, Huang 
et al.\ \cite{HuLiPo06} proposed an iterative (heuristic) algorithm to minimize a nonconvex 
penalized likelihood. Dahl et al.\ \cite{DaVaRo06,DaRoVa04} applied Newton method, 
coordinate steepest descent method, and conjugate gradient method for the problems 
for which the conditional independence structure is partially known.

As shown in d'Aspremont et al.\ \cite{DaOnEl06} (see also \cite{BaElDaNa06}), and 
Yuan and Lin \cite{YuLi07-1}, the $l_1$-norm penalized maximum likelihood 
estimation problems \eqnok{relax} and \eqnok{relax-yl} are capable of discovering 
effectively the sparse structure, or equivalently, the conditional independence in 
the underlying graphical model. Also, it is not hard to see that the estimation 
problem \eqnok{relax-yl} becomes a special case of problem \eqnok{relax} if 
replacing $\Sigma$ by $\Sigma + \rho I$ in \eqnok{relax-yl}. For these reasons, 
we focus on problem \eqnok{relax} only for the remaining paper.
 
\subsection{Non-smooth strictly concave maximization reformulation}
\label{reform}

In this subsection, we show that problem \eqnok{relax} can be reformulated as a 
non-smooth strictly concave maximization problem of the form \eqnok{concave-opt}. 

Recall from Subsection \ref{intro-cov} that $\Sigma \in \cS^n_{+}$, and keep in mind 
that the notations $|\cdot|$, $\|\cdot\|$ and $\|\cdot\|_F$ are defined in 
Subsection \ref{notation}. We first provide some tighter bounds on the optimal 
solution of problem \eqnok{relax} for the case where $\talpha=0$ and $\tbeta=\infty$. 

\begin{proposition} \label{bound}
Assume that $\talpha=0$ and $\tbeta=\infty$. Let $X^* \in \cS^n_{++}$ be the unique optimal 
solution of problem \eqnok{relax}. Then we have $\alpha I \preceq X^* \preceq \beta I$, where 
\beq \label{alpha}
\alpha = \frac{1}{\|\Sigma\|+n\rho}, \ \ \ 
\beta = \min\left\{\frac{n-\alpha \tr(\Sigma)}{\rho}, \eta \right\}
\eeq
with
\[
\eta = \left\{\ba{ll}
\min\left\{e^T |\Sigma^{-1}|e, (n - \rho \sqrt{n} \alpha)\|\Sigma^{-1}\| - (n-1) \alpha\right\},  
& \ \mbox{if} \ \Sigma \ \mbox{is invertible}; \\ [8pt]
2 e^T |(\Sigma+\frac{\rho}{2}I)^{-1}|e - \tr((\Sigma+\frac{\rho}{2}I)^{-1}), & \ \mbox{otherwise}. 
\ea \right.
\]  
\end{proposition}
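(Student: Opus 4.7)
The plan is to derive first-order optimality conditions for $X^*$ and then extract the spectral bounds from them. Since $\log\det X$ is strictly concave on $\cS^n_{++}$ and diverges to $-\infty$ at the boundary, while $-\rho e^T|X|e$ dominates the other terms as $\|X\|\to\infty$, problem \eqnok{relax} with $\talpha=0$ and $\tbeta=\infty$ admits a unique optimizer $X^*\in\cS^n_{++}$. The subdifferential of $e^T|X|e$ at $X^*$ consists of symmetric matrices $U$ satisfying $|U_{ij}|\le 1$ for all $i,j$, with $U_{ij}$ equal to the sign of $X^*_{ij}$ whenever $X^*_{ij}\neq 0$, so optimality reads
\[
(X^*)^{-1}=\Sigma+\rho U.
\]
Taking the inner product with $X^*$ and noting $\langle X^*,U\rangle=e^T|X^*|e$ yields the pivotal trace identity $n=\langle X^*,\Sigma\rangle+\rho\, e^T|X^*|e$, which I invoke repeatedly.

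For the lower bound $\alpha$, I take operator norms in the optimality equation. Any symmetric $U$ with $|U_{ij}|\le 1$ satisfies $\|U\|\le\|U\|_F\le n$, whence $\lambda_{\max}((X^*)^{-1})\le\|\Sigma\|+n\rho$, and inverting gives $\lambda_{\min}(X^*)\ge 1/(\|\Sigma\|+n\rho)=\alpha$.

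For the upper bound, the first ingredient $(n-\alpha\tr(\Sigma))/\rho$ of $\beta$ falls out of the trace identity immediately: since $X^*\succeq\alpha I$ and $\Sigma\succeq 0$ give $\langle X^*,\Sigma\rangle\ge\alpha\tr(\Sigma)$, the identity forces $e^T|X^*|e\le(n-\alpha\tr(\Sigma))/\rho$, and $\lambda_{\max}(X^*)\le\tr(X^*)\le e^T|X^*|e$ (the diagonal of a PSD matrix is part of its entrywise $\ell_1$-norm) closes this case. The bound $\eta$ I handle in two subcases. When $\Sigma$ is invertible, the estimate $e^T|\Sigma^{-1}|e$ comes from global optimality $F(X^*)\ge F(\Sigma^{-1})$ ($F$ denoting the objective): the concavity inequality $\log\det X^*-\log\det\Sigma^{-1}\le\langle\Sigma,X^*-\Sigma^{-1}\rangle$ substituted into the optimality inequality causes the $\log\det$ and $\langle\Sigma,\cdot\rangle$ terms to cancel, leaving $e^T|X^*|e\le e^T|\Sigma^{-1}|e$. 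The competing estimate $(n-\rho\sqrt{n}\alpha)\|\Sigma^{-1}\|-(n-1)\alpha$ I assemble from four pieces: (i) $\tr(X^*)\le\|\Sigma^{-1}\|\langle X^*,\Sigma\rangle$ via the PSD estimate $\tr(AB)\le\lambda_{\max}(A)\tr(B)$; (ii) $\langle X^*,\Sigma\rangle=n-\rho e^T|X^*|e$ from the trace identity; (iii) $e^T|X^*|e\ge\|X^*\|_F\ge\sqrt{n}\alpha$, combining entrywise $\ell_1\ge\ell_2$ with $\|X^*\|_F^2=\sum_i\lambda_i(X^*)^2\ge n\alpha^2$; and (iv) $\lambda_{\max}(X^*)\le\tr(X^*)-(n-1)\alpha$, since the remaining $n-1$ eigenvalues each dominate $\alpha$. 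When $\Sigma$ is singular, I imitate the optimality/concavity argument with $Y:=(\Sigma+(\rho/2)I)^{-1}$ in place of $\Sigma^{-1}$; this shift is chosen so that $Y^{-1}-\Sigma=(\rho/2)I$, leaving $(\rho/2)\tr(X^*-Y)$ on the right after cancellation, and rearrangement yields $2e^T|X^*|e-\tr(X^*)\le 2e^T|Y|e-\tr(Y)$, which combined with $\tr(X^*)\le e^T|X^*|e$ gives $\lambda_{\max}(X^*)\le 2e^T|Y|e-\tr(Y)$.

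The main obstacle is engineering $\eta$ to its advertised form. Getting $\sqrt{n}\alpha$ rather than the naive $n\alpha$ in the second estimate is exactly what the Frobenius-to-$\ell_1$ comparison buys, and the singular-case shift by precisely $(\rho/2)I$ is dictated by needing $Y^{-1}-\Sigma$ to be a scalar multiple of $I$ whose magnitude is tuned so that the telescoping of the concavity step leaves $2e^T|Y|e-\tr(Y)$ after absorbing $\tr(X^*)\le e^T|X^*|e$; any other shift would either destroy the identity structure of the residual or degrade the resulting bound.
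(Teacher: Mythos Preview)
Your argument is correct and follows essentially the same route as the paper's proof: both derive the optimality system $(X^*)^{-1}=\Sigma+\rho U$ with $\langle X^*,U\rangle=e^T|X^*|e$, extract the trace identity $\langle X^*,\Sigma\rangle+\rho\,e^T|X^*|e=n$, and then combine the optimality of $X^*$ for $F$ with the optimality of $(\Sigma+t\rho I)^{-1}$ for $\log\det X-\langle\Sigma+t\rho I,X\rangle$ to bound $e^T|X^*|e$. The only cosmetic difference is that the paper packages the comparison step into the one-parameter family $X(t)=(\Sigma+t\rho I)^{-1}$, obtaining $e^T|X^*|e-t\,\tr(X^*)\le e^T|X(t)|e-t\,\tr(X(t))$ for all $t\in(0,1)$ and then specializing to $t\downarrow 0$ and $t=1/2$, whereas you carry out the two cases directly via the concavity inequality for $\log\det$ at $\Sigma^{-1}$ and at $(\Sigma+\tfrac{\rho}{2}I)^{-1}$; since the concavity inequality at $X(t)$ is equivalent to $X(t)$ being the unconstrained maximizer of $\log\det X-\langle\Sigma+t\rho I,X\rangle$, the two presentations are interchangeable.
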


\begin{proof}
Let
\beq \label{cU}
\cU := \{U\in \cS^n: \ |U_{ij}| \le 1, \forall ij\},  
\eeq
and 
\beq \label{lagr}
L(X, U) = \log\det X - \langle \Sigma+\rho U, X \rangle, \ \ \ \forall (X,U) \in \cS^n_{++} \times \cU. 
\eeq  
Note that $X^* \in \cS^n_{++}$ is the optimal solution of problem \eqnok{relax}.
It can be easily shown that there exists some $U^*\in \cU$ such that $(X^*, U^*)$ is 
a saddle point of $L(\cdot, \cdot)$ on $\cS^n_{++} \times \cU$, that is, 
\[
X^* = \arg\min\limits_{X \in \cS^n_{++}} L(X,U^*), \ \ \ \ U^* \in \Arg\min\limits_{U\in \cU} L(X^*,U).  
\]
The above relations along with \eqnok{cU} and \eqnok{lagr} immediately yield
\beq \label{optcond}
X^*(\Sigma + \rho U^*) = I, \ \ \ \ \langle X^*, U^* \rangle = e^T |X^*|e.
\eeq
Hence, we have   
\[
X^* = (\Sigma + \rho U^*)^{-1} \succeq \frac{1}{\|\Sigma\|+ \rho \|U^*\|} I,
\]
which together with \eqnok{cU} and the fact $U^*\in \cU$, implies that 
$X^* \succeq \frac{1}{\|\Sigma\|+n\rho} I$. Thus as desired, $X^* \succeq \alpha I$, 
where $\alpha$ is given in \eqnok{alpha}. 

We next bound $X^*$ from above. In view of \eqnok{optcond}, we have
\beq \label{trace}
\langle X^*, \Sigma \rangle + \rho e^T |X^*|e = n,
\eeq
which together with the relation $X^* \succeq \alpha I$ implies that 
\beq \label{eXe1}
e^T |X^*|e \le \frac{n-\alpha \tr(\Sigma)}{\rho}.
\eeq
Now let $X(t) := (\Sigma + t \rho I)^{-1}$ for $t\in (0,1)$. By concavity 
of $\log\det(\cdot)$, one can easily see that $X(t)$ maximizes the function 
$\log\det(\cdot) - \langle \Sigma + t \rho I, \cdot \rangle$ over 
$\cS^n_{++}$. Using this observation and the definition of $X^*$, we can have 
\beqas
\log\det X^* - \langle \Sigma + t \rho I, X^* \rangle &\le & \log\det X(t) - 
\langle \Sigma + t \rho I, X(t) \rangle, \\ [5pt]
\log\det X(t) - \langle \Sigma, X(t) \rangle - \rho e^T |X(t)| e & \le& 
\log\det X^* - \langle \Sigma, X^* \rangle - \rho e^T |X^*| e.
\eeqas 
Adding the above two inequalities upon some algebraic simplification, we 
obtain that
\[
e^T |X^*| e  - t \tr(X^*) \le e^T |X(t)| e - t \tr(X(t)), 
\]
and hence,
\beq \label{exe-bdd}
e^T |X^*| e \le \frac{e^T|X(t)|e - t\tr(X(t))}{1-t}, \ \ \forall t\in (0,1).
\eeq 
If $\Sigma$ is invertible, upon letting $t \downarrow 0$ on both sides of \eqnok{exe-bdd},
we have
\[
e^T |X^*| e \le e^T |\Sigma^{-1}|e.
\]
Otherwise, letting $t=1/2$ in \eqnok{exe-bdd}, we obtain 
\[
e^T |X^*| e \le 2 e^T |(\Sigma+\frac{\rho}{2}I)^{-1}|e - \tr((\Sigma+\frac{\rho}{2}I)^{-1}).
\]
Combining the above two inequalities and \eqnok{eXe1}, we have  
\beq \label{eXe2}
\|X^*\| \le \|X^*\|_F \le e^T |X^*| e \le \min\left\{\frac{n-\alpha \tr(\Sigma)}{\rho}, 
\gamma \right\},
\eeq 
where 
\[
\gamma = \left\{\ba{ll}
e^T |\Sigma^{-1}|e,  & \ \mbox{if} \ \Sigma\ \mbox{is invertible}; \\ [5pt]
2 e^T |(\Sigma+\frac{\rho}{2}I)^{-1}|e - \tr((\Sigma+\frac{\rho}{2}I)^{-1}), & 
\ \mbox{otherwise}. 
\ea \right.
\]
Further, using the relation $X^* \succeq \alpha I$, we obtain that 
\[
e^T |X^*| e \ge \|X^*\|_F \ge \sqrt{n} \alpha,
\]
which together with \eqnok{trace} implies that 
\[
\tr(X^*\Sigma) \le n - \rho \sqrt{n} \alpha.
\] 
This inequality along with the relation $X^* \succeq \alpha I$ yields  
\[
\lambda_{\min}(\Sigma) ((n-1)\alpha + \|X^*\|) \le \tr(X^*\Sigma) \le 
n - \rho \sqrt{n} \alpha.
\]
Hence if $\Sigma$ is invertible, we further have  
\[
\|X^*\| \le (n - \rho \sqrt{n} \alpha)\|\Sigma^{-1}\| - (n-1) \alpha.
\] 
This together with \eqnok{eXe2} implies that $X^* \preceq \beta I$, 
where $\beta$ is given in \eqnok{alpha}.
\end{proof}

\vgap

\begin{remark}
Some bounds on $X^*$ were also derived in d'Aspremont et al.\ \cite{DaOnEl06}. 
In contrast with their bounds, our bounds given in \eqnok{alpha} are tighter. 
Moreover, our approach for deriving the above bounds can be generalized to 
handle the case where $\talpha >0$ and $\tbeta=\infty$, but their approach 
cannot. Indeed, if $\talpha >0$ and $\tbeta=\infty$, we can set 
$\alpha = \talpha$, and replace the above $X(t)$ by the optimal 
solution of 
\[
\begin{array}{ll}
\max\limits_X  & \log\det X - \langle \Sigma+\rho I, X \rangle \\ 
\mbox{s.t.} & \talpha I \preceq X,
\end{array} 
\] 
which has a closed-form expression. By following a similar derivation as 
above, one can obtain a positive scalar $\beta$ such that $X^* \preceq \beta I$. In 
addition, for the case where $\talpha=0$ and $0 <\tbeta <\infty$, one can 
set $\beta = \tbeta$ and easily show that $X^* \ge \alpha I$, where 
$\alpha = \beta e^{-\beta(\tr(\Sigma)+n\rho)}$.   
\end{remark}

\vgap

From the above discussion, we conclude that problem \eqnok{relax} is equivalent to 
the following problem:
\beq \label{new-relax}
\begin{array}{ll}
\max\limits_X  & \log\det X - \langle \Sigma, X \rangle - \rho e^T |X| e\\ 
\mbox{s.t.} & \alpha I \preceq X \preceq \beta I,
\end{array}   
\eeq
for some $0 < \alpha < \beta < \infty$.

We further observe that problem \eqnok{new-relax} can be rewritten as  
\beq \label{smooth-saddle}
\max\limits_{X \in \cX} \min\limits_{U \in \cU} \ 
\log\det X - \langle \Sigma + \rho U, X \rangle, 
\eeq 
where $\cU$ is defined in \eqnok{cU}, and $\cX$ is defined as follows:
\beq \label{cX}
\cX := \{X\in\cS^n: \ \alpha I \preceq X \preceq \beta I\}.
\eeq
Therefore, we conclude that problem \eqnok{relax} is equivalent to 
\eqnok{smooth-saddle}. For the remaining paper, we will focus on 
problem \eqnok{smooth-saddle} only. 

\subsection{Smooth optimization method for sparse covariance selection}
\label{appl} 

In this subsection, we describe the implementation details of the Smooth 
Minimization Algorithm proposed in Section \ref{smooth-appr} for solving 
problem \eqnok{smooth-saddle}. We also compare the complexity of this 
algorithm with interior point methods, and two other first-order methods 
studied in d'Aspremont et al.\ \cite{DaOnEl06}, that is, Nesterov's smooth 
approximation scheme and block coordinate descent method. 

We first observe that the sets $\cX$ and $\cU$ both lie in the space $\cS^n$, 
where $\cX$ and $\cU$ are defined in \eqnok{cX} and \eqnok{cU}, respectively. 
Let $\cS^n$ be endowed with the Frobenius norm, and let $\td(X) = \log\det X$ 
for $X\in\cX$. Then for any $X \in \cX$, we have
\[
\nabla^2\td(X)[H, H] = -\tr(X^{-1}HX^{-1}H) \le -\beta^{-2} \|H\|^2_F
\]
for all $H\in \cS^n$, and hence, $\td(X)$ is strongly concave on $\cX$ 
with modulus $\beta^{-2}$. Using this result and Theorem 1 
of \cite{Nest05-1}, we immediately conclude that $\nabla f(U)$ is Lipschitz 
continuous with constant $L = \rho^2\beta^2$ on $\cU$, where 
\beq \label{fU}
f(U) := \max\limits_{X \in \cX} \ \log\det X - \langle \Sigma + \rho U, 
X \rangle, \ \ \ \forall U\in\cU.  
\eeq
Denote the unique optimal solution of problem \eqnok{fU} by $X(U)$. 
For any $U\in\cU$, we can compute $X(U)$, $f(U)$ and $\nabla f(U)$ as 
follows. 

Let $\Sigma + \rho U = Q \diag(\gamma) Q^T$ be an eigenvalue 
decomposition of $\Sigma + \rho U$ such that $QQ^T=I$. For $i=1,\ldots,n$, 
let 
\[
\lambda_i = \left\{\ba{ll} \min\{\max\{1/{\gamma_i}, \alpha\}, \beta\}, 
& \ \mbox{if} \ \gamma_i > 0; \\ 
\beta, \ & \ \mbox{otherwise}.
\ea \right. 
\] 
It is not hard to show that 
\beq \label{xfu}
X(U)=Q\diag(\lambda)Q^T, \ \ \ 
f(U) = -\gamma^T\lambda + \sum\limits_{i=1}^n \log\lambda_i, 
\ \ \ \nabla f(U) = -\rho X(U).
\eeq

From the above discussion, we see that problem \eqnok{smooth-saddle} has 
exactly the same form as \eqnok{concave-opt}, and also satisfies all assumptions 
imposed on problem \eqnok{concave-opt}. Therefore, it can be suitably solved by the 
Smooth Minimization Algorithm proposed in Section \ref{smooth-appr}. The 
implementation details of this algorithm for problem \eqnok{smooth-saddle} are 
described as follows. 

Given $U_0\in\cU$, let $d(U) = \|U-U_0\|^2_F/2$ be the proximal function on 
$\cU$, which is strongly convex function with modulus $\sigma=1$. 
For our specific choice of the norm and $d(U)$, we clearly see that steps 2) 
and 3) of the Smooth Minimization Algorithm can be solved as a problem of the 
form
\[
V = \arg\min\limits_{U\in \cU} \langle G, U \rangle + \|U\|^2_F/2
\]
for some $G\in\cS^n$. In view of \eqnok{cU}, we see that 
\[
V_{ij} = \max\{\min\{-G_{ij},1\}, -1\}, \ i,j=1,\ldots,n.
\]  
In addition, for any $X\in\cX$, we define
\beq \label{gx}  
g(X) := \log\det X - \langle \Sigma, X \rangle - \rho e^T |X| e.
\eeq

For the ease of comparison with its latter variant, we now present a 
complete version of the aforementioned Smooth Minimization Algorithm 
for solving problem \eqnok{smooth-saddle} and its dual.

\gap

\noindent
\begin{minipage}[h]{6.6 in}
{\bf Smooth Minimization Algorithm for Covariance Selection (SMACS):} \\ [5pt]
Let $\epsilon > 0$ and $U_0 \in \cU$ be given. Set $X_{-1}=0$,
$L=\rho^2\beta^2$, $\sigma=1$, and $k=0$.
\begin{itemize}
\item[1)]
Compute $\nabla f(U_k)$ and $X(U_k)$. Set $X_{k} = \frac{k}{k+2} X_{k-1} + 
\frac{2}{k+2} X(U_k)$.
\item[2)]
Find $U^{sd}_{k} = \argmin \left \{ \langle \nabla f(U_k), U-U_k \rangle +
\frac{L}2 \, \|U-U_k\|^2_F: \ U \in \cU \right \}$.
\item[3)]
Find $U^{ag}_{k} = \argmin \left \{ \frac{L}{2\sigma}\|U-U_0\|^2_F+\sum\limits_{i=0}^k
\frac{i+1}2 [f(U_i) + \langle \nabla f(U_i), U-U_i \rangle]: \ U \in \cU \right \}$. 
\item[4)]
Set $U_{k+1} = \frac{2}{k+3} U^{ag}_{k} + \frac{k+1}{k+3} U^{sd}_{k}$.
\item[5)]
Set $k \leftarrow k+1$. Go to step 1) until $f(U^{sd}_k) - g(X_k) \le \epsilon$.
\end{itemize}
\noindent
{\bf end}
\end{minipage}
\vgap

The iteration complexity of the above algorithm for solving problem \eqnok{smooth-saddle} 
is established in the following theorem. 

\begin{theorem} \label{mtm-sparcov}
The iteration complexity performed by the algorithm SMACS for finding an $\epsilon$-optimal 
solution to problem \eqnok{smooth-saddle} and its dual does not exceed 
$\sqrt{2}\rho\beta\max\limits_{U\in\cU}\|U-U_0\|_F/\sqrt{\epsilon}$, and moreover, 
if $U_0=0$, it does not exceed $\sqrt{2}\rho\beta n/\sqrt{\epsilon}$. 
\end{theorem}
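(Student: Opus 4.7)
The plan is to apply Theorem \ref{mtm-concave} directly, since SMACS is just the Smooth Minimization Algorithm specialized to problem \eqnok{smooth-saddle} with a particular choice of norm and prox-function. I would first verify that the assumptions of Theorem \ref{mtm-concave} are satisfied in this setting: as argued earlier in Subsection \ref{appl}, problem \eqnok{smooth-saddle} has the form \eqnok{concave-opt}, its inner function $\log\det X - \langle \Sigma + \rho U, X\rangle$ is strictly concave in $X$ and affine (hence convex differentiable) in $U$, and the gradient of $f(U)$ defined in \eqnok{fU} is Lipschitz continuous on $\cU$ (with respect to the Frobenius norm) with constant $L = \rho^2 \beta^2$. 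Also, the prox-function $d(U) = \|U-U_0\|_F^2/2$ is strongly convex on $\cU$ with modulus $\sigma = 1$, with $U_0$ being its minimizer and $d(U_0) = 0$ as required.

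Next, I would invoke the complexity bound $2\sqrt{LD/(\sigma\epsilon)}$ from Theorem \ref{mtm-concave}, where $D = \max\{d(U): U \in \cU\}$. Substituting $L = \rho^2\beta^2$, $\sigma = 1$, and $D = \tfrac{1}{2}\max_{U \in \cU} \|U-U_0\|_F^2$ yields
\[
2\sqrt{\frac{LD}{\sigma \epsilon}}
= 2\sqrt{\frac{\rho^2\beta^2 \, \max_{U\in\cU}\|U-U_0\|_F^2}{2\epsilon}}
= \sqrt{2}\,\rho\beta \, \frac{\max_{U\in\cU}\|U-U_0\|_F}{\sqrt{\epsilon}},
\]
which gives the first bound in the statement.

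For the second bound, I would specialize to $U_0 = 0$ and compute $\max_{U \in \cU}\|U\|_F$ using the definition \eqnok{cU} of $\cU$. Since $\cU = \{U \in \cS^n : |U_{ij}| \le 1 \text{ for all } i,j\}$, the Frobenius norm is maximized by any $U$ with $|U_{ij}| = 1$ for every entry (e.g., $U = ee^T$), giving $\|U\|_F = \sqrt{n^2} = n$. Substituting into the first bound yields the claimed $\sqrt{2}\,\rho\beta n/\sqrt{\epsilon}$. No step here is really an obstacle — once the ingredients ($L$, $\sigma$, $D$) have been identified in Subsection \ref{appl}, the theorem is essentially a calculation; the only thing to be slightly careful about is making sure the factor of $1/2$ in $d(U)$ is carried through correctly when extracting the square root.
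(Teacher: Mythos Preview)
Your proposal is correct and follows essentially the same approach as the paper's proof: identify $L=\rho^2\beta^2$, $\sigma=1$, and $D=\tfrac{1}{2}\max_{U\in\cU}\|U-U_0\|_F^2$, plug into the bound $2\sqrt{LD/(\sigma\epsilon)}$ from Theorem~\ref{mtm-concave}, and then specialize to $U_0=0$ using $\max_{U\in\cU}\|U\|_F=n$. The paper's proof is terser but relies on exactly the same ingredients and calculation.
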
    

\begin{proof}
From the above discussion, we know that $L = \rho^2\beta^2$, $D=\max\limits_{U\in\cU}
\|U-U_0\|^2_F/2$ and $\sigma=1$, which together with Theorem \ref{mtm-concave} 
immediately implies that the first part of the statement holds. Further, if $U_0=0$, 
we easily obtain from \eqnok{cU} that $D=\max\limits_{U\in\cU} \|U\|^2_F/2=n^2/2$. 
The second part of the statement directly follows from this result and Theorem 
\ref{mtm-concave}.  
\end{proof}

\vgap

\begin{remark}
By the definition of $\cU$ (see \eqnok{cU}), we can easily show that 
$\min\limits_{U_0\in\cU}\max\limits_{U\in\cU} \|U-U_0\|_F$ has a unique 
minimizer $U_0=0$. This result together with Theorem \ref{mtm-sparcov} 
implies that the initial point $U_0=0$ gives the optimal worst-case 
iteration complexity for the algorithm SMACS.  
\end{remark}

\vgap

Alternatively, d'Aspremont et al.\ \cite{DaOnEl06} applied Nesterov's smooth approximation 
scheme \cite{Nest05-1} to solve problem \eqnok{smooth-saddle}. More specifically, let 
$\epsilon > 0$ be the desired accuracy, and let
\[
\hd(U) = \|U\|^2_F/2,  \ \ \ \hD = \max\limits_{U\in \cU} \hd(U) = n^2/2.
\]
As shown in \cite{Nest05-1}, the non-smooth function $g(X)$ defined in \eqnok{gx} is
uniformly approximated by the smooth function 
\[
g_{\epsilon}(X) = \min\limits_{U \in \cU} \ 
\log\det X - \langle \Sigma + \rho U, X \rangle - \frac{\epsilon}{2\hD} \hd(U)
\]
on $\cX$ with the error at most by $\epsilon/2$, and moreover, the function $g_{\epsilon}(X)$ 
has a Lipschitz continuous gradient on $\cX$ with some constant $L(\epsilon)>0$. 
Nesterov's smooth optimization technique \cite{Nest83-1,Nest05-1} is then 
applied to solve the perturbed problem $\max\limits_{X \in \cX} g_{\epsilon}(X)$, and problem 
\eqnok{smooth-saddle} is accordingly solved. 
It was shown in \cite{DaOnEl06} that the iteration complexity of this approach for finding an 
$\epsilon$-optimal solution to problem \eqnok{smooth-saddle} does not exceed  
\beq \label{sa-comp}
\frac{2\sqrt{2}\rho\beta n^{1.5}\log\kappa}{\epsilon} 
+ \kappa \sqrt{\frac{n\log\kappa}{\epsilon}}
\eeq
where $\kappa := \beta/\alpha$. 

In view of \eqnok{sa-comp} and Theorem \ref{mtm-sparcov}, we conclude that the 
smooth optimization approach improves upon Nesterov's smooth approximation scheme 
at least by a factor $\cO(\sqrt{n}\log \kappa/\sqrt{\epsilon})$ in terms of the 
iteration complexity for solving problem \eqnok{smooth-saddle}. Moreover, the 
computational cost per iteration of the former approach is at least as cheap 
as that of the latter one.
  
d'Aspremont et al.\ \cite{DaOnEl06} also studied a block-coordinate descent method for 
solving problem \eqnok{relax} with $\talpha=0$ and $\tbeta=\infty$. Each iterate 
of this method requires computing the inverse of an $(n-1)\times (n-1)$ matrix, 
and solving a box constrained quadratic programming with $n-1$ variables. 
As mentioned in Section 3 of \cite{DaOnEl06}, this method has a local linear 
convergence rate. However, its global iteration complexity for finding an 
$\epsilon$-optimal solution is theoretically unknown. Moreover, this method is 
not suitable for solving problem \eqnok{relax} with $\talpha>0$ or $\tbeta < \infty$.   

In addition, we observe that problem \eqnok{new-relax} (also \eqnok{relax}) can be reformulated 
as a constrained smooth convex problem that has an explicit $\cO(n^2)$-logarithmically homogeneous 
self-concordant barrier function. Thus, it can be suitably solved by interior point (IP) 
methods (see Nesterov and Nemirovski \cite{NeNe94} and Vandenberghe et al.\ \cite{VaBoWu98}). 
The worst-case iteration complexity of IP methods for finding an $\epsilon$-optimal solution 
to \eqnok{new-relax} is $\cO(n\log(\epsilon_0/\epsilon))$, where $\epsilon_0$ is an initial 
gap. Each iterate of IP methods requires $\cO(n^6)$ arithmetic cost for assembling and 
solving a typically dense Newton system with $\cO(n^2)$ variables.  Thus, the total worst-case 
arithmetic cost of IP methods for finding an $\epsilon$-optimal solution to \eqnok{new-relax} is 
$\cO(n^7\log(\epsilon_0/\epsilon))$. In contrast with IP methods, the algorithm SMACS 
requires $\cO(n^3)$ arithmetic cost per iteration dominated by eigenvalue decomposition 
and matrix multiplication of $n \times n$ matrices. Based on this observation and Theorem 
\ref{mtm-sparcov}, we conclude that the overall worst-case arithmetic cost of the algorithm 
SMACS for finding an $\epsilon$-optimal solution to \eqnok{new-relax} is 
$\cO(\rho\beta n^4/{\sqrt{\epsilon}})$, which is substantially superior to that of 
IP methods, provided that $\rho\beta$ is not too large and $\epsilon$ is not too 
small.       
 
\subsection{Variant of Smooth Minimization Algorithm}
\label{variant}

As discussed in Subsection \ref{appl}, the algorithm SMACS has a nice theoretical 
complexity in contrast with IP methods, Nesterov's smooth approximation scheme, and 
block-coordinate descent method. However, its practical performance is still not much 
attractive (see Section \ref{comp}). To enhance the computational performance, we 
propose a variant of the algorithm SMACS for solving problem \eqnok{smooth-saddle} 
in this subsection.  
 
Our first concern of the algorithm SMACS is that the eigenvalue decomposition of two 
$n \times n$ matrices is required per iteration. Indeed, the eigenvalue 
decomposition of $\Sigma+\rho U_k$ and $\Sigma+\rho U^{sd}_k$ is needed 
at steps 1) and 5) to compute $\nabla f(U_k)$ and $f(U^{sd}_k)$, respectively. 
We also know that the eigenvalue decomposition is one of major computations 
for the algorithm SMACS. To reduce the computational cost, we now propose
a new termination criterion other than $f(U^{sd}_k) - g(X_k) \le \epsilon$ 
that is used in the algorithm SMACS. In view of Theorem \ref{prim-conv}, we 
know that 
\[
f(U_k) - g(X(U_k)) \to 0, \ \mbox{as} \ k \to \infty. 
\] 
Thus, $f(U_k) - g(X(U_k)) \le \epsilon$ can be used as an alternative termination 
criterion. Moreover, it follows from \eqnok{xfu} that the quantity $f(U_k) - g(X(U_k))$ 
is readily available in step 1) of the algorithm SMACS with almost no additional cost. 
We easily see that the algorithm SMACS with this new termination criterion would 
require only one eigenvalue decomposition per iteration. Despite this clear advantage, 
we shall mention that the iteration complexity of the resulting algorithm is unfortunately 
unknown. Nevertheless, in practice we have found that the number of iterations performed by 
the algorithm SMACS with the above two different termination criteria are almost 
same. Thus, $f(u_k) - g(x(u_k)) \le \epsilon$ is a useful practical termination 
criterion. 
 
For sparse covariance selection, the penalty parameter $\rho$ is usually small, 
but the parameter $\beta$ can be fairly large. In view of Theorem \ref{mtm-sparcov}, 
we know that the iteration complexity of the algorithm SMACS for solving problem 
\eqnok{smooth-saddle} is proportional to $\beta$. Therefore, when $\beta$ is too 
large, the complexity and practical performance of this algorithm become unattractive. 
To overcome this drawback, we will propose one strategy to dynamically update $\beta$.   

Let $X^*$ be the unique optimal solution of problem \eqnok{smooth-saddle}. For any 
$\hbeta \in [\lambda_{\max}(X^*), \beta]$, we easily observe that $X^*$ is also the unique 
optimal solution to the following problem:      
\beq \label{smooth-hbeta}
(P_{\hbeta}) \hspace{1.in} \max\limits_{X \in \cXb} \min\limits_{U \in \cU} \ 
\log\det X - \langle \Sigma + \rho U, X \rangle, 
\eeq 
where $\cU$ is defined in \eqnok{cU}, and $\cXb$ is given by
\[
\cXb := \{X: \ \alpha I \preceq X \preceq \hbeta I\}.
\]
In view of Theorem \ref{mtm-sparcov}, the iteration complexity of the algorithm SMACS 
for problem \eqnok{smooth-hbeta} is lower than that for problem \eqnok{smooth-saddle} 
provided $\hbeta \in [\lambda_{\max}(X^*), \beta)$. Hence ideally, we set 
$\hbeta=\lambda_{\max}(X^*)$, which would give the lowest iteration complexity, 
but unfortunately, $\lambda_{\max}(X^*)$ is unknown. However, we can generate a 
sequence $\{\hbeta_k\}^{\infty}_{k=0}$ that asymptotically approaches $\lambda_{\max}(X^*)$ 
as the algorithm progresses. Indeed, in view of Theorem \ref{prim-conv}, we know that 
$X(U_k) \to X^*$ as $k \to \infty$, and we obtain that 
\[
\lambda_{\max}(X(U_k)) \to \lambda_{\max}(X^*), \ \mbox{as} \ k \to \infty.
\]     
Therefore, we see that $\{\lambda_{\max}(X(U_k))\}^{\infty}_{k=0}$ can be used to 
generate a sequence $\{\hbeta_k\}^{\infty}_{k=0}$ that asymptotically approaches 
$\lambda_{\max}(X^*)$. We next propose a strategy to generate such a sequence 
$\{\hbeta_k\}^{\infty}_{k=0}$.

For convenience of presentation, we introduce some new notations. Given any 
$U\in \cU$ and $\hbeta \in [\alpha, \beta]$, we define  
\beqa
X_{\hbeta}(U) &:=& \arg\max\limits_{X \in \cXb} \ \log\det X - \langle \Sigma + \rho U, 
 X \rangle, \label{xbeta} \\
f_{\hbeta}(U) &:=& \max\limits_{X \in \cXb} \ \log\det X - \langle \Sigma + \rho U, 
 X \rangle. \label{fbeta}  
\eeqa 
 
\begin{definition} 
Given any $U\in \cU$ and $\hbeta \in [\alpha,\beta]$, $X_{\hbeta}(U)$ is called ``active'' 
if $\lambda_{\max}(X_{\hbeta}(U))=\hbeta$ and $\hbeta < \beta$; otherwise it is 
called ``inactive''. 
\end{definition}

Let $\varsigma_1$, $\varsigma_2 > 1$, and $\varsigma_3 \in (0,1)$ be 
given and fixed. Assume that $U_k \in\cU$ and $\hbeta_k \in [\alpha, \beta]$ are given
at the beginning of the $k$th iteration for some $k \geq 0$. We now describe the strategy 
for generating the next iterate $U_{k+1}$ and $\hbeta_{k+1}$ by considering the following 
three different cases: 
\bi
\item[1)] 
If $\xbuk$ is active, find the smallest $s\in \cZ_{+}$ such that $X_{\bbeta}(U_k)$ is 
inactive, where $\bbeta=\min\{\varsigma_1^s \hbeta_k, \beta\}$. Set 
$\hbeta_{k+1} = \bbeta$, and apply the algorithm SMACS for problem $(P_{\hbeta_{k+1}})$ 
starting with the point $U_k$ and set its next iterate to be $U_{k+1}$.
\item[2)]
If $\xbuk$ is inactive and $\lmax(\xbuk) \le \varsigma_3 \hbeta_k$, set $\hbeta_{k+1} = 
\max\{\min\{\varsigma_2\lmax(\xbuk),\\ \beta\},\alpha\}$. Apply the algorithm SMACS for problem 
$(P_{\hbeta_{k+1}})$ starting with the point $U_k$, and set its next iterate to be $U_{k+1}$.
\item[3)]
If $\xbuk$ is inactive and $\lmax(\xbuk) > \varsigma_3 \hbeta_k$, set $\hbeta_{k+1} = \hbeta_k$.
Continue the algorithm SMACS for problem $(P_{\hbeta_k})$, and set its next iterate to be 
$U_{k+1}$.
\ei
  
For the sequences $\{U_k\}^{\infty}_{k=0}$ and $\{\hbeta_k\}^{\infty}_{k=0}$ recursively 
generated above, we observe that the sequence $\{X_{\hbeta_{k+1}}(U_k)\}^{\infty}_{k=0}$ 
is always inactive. This together with \eqnok{xbeta}, \eqnok{fbeta}, \eqnok{fU} and 
the fact that $\hbeta_k \le \beta$ for $k \ge 0$, implies that
\beq \label{fgbeta}
f(U_k) = f_{\hbeta_{k+1}}(U_k), \ \ \ \nabla f(U_k) = \nabla f_{\hbeta_{k+1}}(U_k), \ \
\forall k \ge 0.
\eeq
Therefore, the new termination criterion $f(U_k) - g(X(U_k)) \le \epsilon$ can be 
replaced by 
\beq \label{newterm}
f_{\hbeta_{k+1}}(U_k) - g(X_{\hbeta_{k+1}}(U_k)) \le \epsilon
\eeq 
accordingly.    

We now incorporate into the algorithm SMACS the new termination criterion \eqnok{newterm} 
and the aforementioned strategy for generating a sequence $\{\hbeta_k\}^{\infty}_{k=0}$ 
that asymptotically approaches $\lmax(X^*)$, and obtain a variant of the algorithm SMACS 
for solving problem \eqnok{smooth-saddle}. For convenience of presentation, we omit the 
subscript $k$ from $\hbeta_k$.

\gap

\noindent
\begin{minipage}[h]{6.6 in}
{\bf Variant of Smooth Minimization Algorithm for Covariance Selection (VSMACS):} \\ [5pt]
Let $\epsilon > 0$, $\varsigma_1$, $\varsigma_2 > 1$, and $\varsigma_3 \in (0,1)$ be given. 
Choose a $U_0 \in \cU$. Set $\hbeta = \beta$, $L=\rho^2\beta^2$, $\sigma=1$, and $k=0$.
\begin{itemize}
\item[1)] Compute $\xhbuk$ according to \eqnok{xfu}.
\bi
\item[1a)] 
If $\xhbuk$ is active, find the smallest $s\in \cZ_{+}$ such that $X_{\bbeta}(U_k)$ is 
inactive, where $\bbeta=\min\{\varsigma_1^s \hbeta, \beta\}$. Set $k=0$, $U_0 = U_k$, 
$\hbeta = \bbeta$, $L=\rho^2\hbeta^2$, and go to step 2). 
\item[1b)]
If $\xhbuk$ is inactive and $\lmax(\xhbuk) \le \varsigma_3 \hbeta$, set $k=0$, $U_0 = U_k$, 
\\ $\hbeta=\max\{\min\{\varsigma_2 \lmax(\xhbuk),\beta\},\alpha\}$, and $L=\rho^2\hbeta^2$. 
\ei
\item[2)] If $f_{\hbeta}(U_k) - g(\xhbuk) \le \epsilon$, terminate.
Otherwise, compute $\nabla f_{\hbeta}(U_k)$ according to \eqnok{xfu}.
\item[3)]
Find $U^{sd}_{k} = \argmin \left \{ \langle \nabla f_{\hbeta}(U_k), U-U_k \rangle +
\frac{L}2 \, \|U-U_k\|^2_F: \ U \in \cU \right \}$.
\item[4)]
Find $U^{ag}_{k} = \argmin \left \{ \frac{L}{2\sigma}\|U-U_0\|^2_F+\sum\limits_{i=0}^k
\frac{i+1}2 [f_{\hbeta}(U_i)+ \langle \nabla f_{\hbeta}(U_i), U-U_i \rangle]: \ U \in \cU \right \}$. 
\item[5)]
Set $U_{k+1} = \frac{2}{k+3} U^{ag}_{k} + \frac{k+1}{k+3} U^{sd}_{k}$.
\item[6)]
Set $k \leftarrow k+1$, and go to step 1). 
\end{itemize}
\noindent
{\bf end}
\end{minipage}
\vgap
    
We next establish some preliminary convergence properties of the above algorithm.

\begin{proposition} \label{prop-conv}
For the algorithm VSMACS, the following properties hold:
\bi
\item[1)] 
Suppose that the algorithm VSMACS terminates at some iterate $(\xhbuk, U_k)$. Then 
$(\xhbuk$, $U_k)$ is an $\epsilon$-optimal solution to problem \eqnok{smooth-saddle} 
and its dual.
\item[2)]
Suppose that $\hbeta$ is updated only for a finite number of times. Then the algorithm 
VSMACS terminates in a finite number of iterations, and produces an $\epsilon$-optimal 
solution to problem \eqnok{smooth-saddle} and its dual. 
\ei 
\end{proposition}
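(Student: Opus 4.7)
My plan rests on a single structural invariant plus two short applications: weak duality for Statement~1, and Theorem~\ref{prim-conv} for Statement~2. The invariant is that whenever VSMACS reaches step~2) with current value $\hbeta$, the iterate $\xhbuk$ is inactive in the sense of the preceding Definition. I would verify this by a short case check on the three branches of step~1). Branch 1a) explicitly forces inactivity, and such an $s$ always exists because at worst $s$ drives $\hbeta$ up to $\beta$, at which point inactivity is automatic. In branch 1b) the updated value is at least $\varsigma_2 \lmax(\xhbuk) > \lmax(\xhbuk)$ (using $\varsigma_2 > 1$ and $\xhbuk \succeq \alpha I$); the closed form \eqnok{xfu} together with the old inactivity shows that the value of $\xhbuk$ is unchanged when $\hbeta$ is raised or lowered as long as the new value stays above $\lmax(\xhbuk)$, so inactivity persists. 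The fall-through case~3) carries inactivity forward by hypothesis. Inactivity in turn yields $\xhbuk = X(U_k)$ and $f_{\hbeta}(U_k) = f(U_k)$, because $X \preceq \hbeta I$ is slack at $\xhbuk$ (or $\hbeta = \beta$), so the same matrix maximizes over the larger set $\cX$.

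Given the invariant, Statement~1 is essentially one line. The termination test in step~2) becomes $f(U_k) - g(X(U_k)) \le \epsilon$. Since $U_k \in \cU$ and $X(U_k) \in \cX$, weak duality for \eqnok{concave-opt}/\eqnok{dual-prob} (instantiated at \eqnok{smooth-saddle}) gives $f^* \le f(U_k)$ and $g(X(U_k)) \le g^*$; combining with $f^* = g^*$ yields both $0 \le f(U_k) - f^* \le \epsilon$ and $0 \le g^* - g(X(U_k)) \le \epsilon$, which is precisely the claimed $\epsilon$-optimality of $(\xhbuk, U_k)$ for \eqnok{smooth-saddle} and its dual.

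For Statement~2, assume $\hbeta$ is updated only finitely often; let $k_0$ be the index of its last change and $\hbeta^*$ its final value. From $k_0$ onward only case~3) occurs in step~1), so neither the inner counter nor $U_0$ is reset, and the subsequent iterates coincide with those produced by SMACS applied to the subproblem $(P_{\hbeta^*})$ starting from the reset $U_0$. This subproblem fits the framework of Section~\ref{smooth-appr}: its primal is $\max_{X \in \cX_{\hbeta^*}} g(X)$ (using $\min_{U\in\cU}\langle -\rho U, X\rangle = -\rho e^T |X| e$) and its dual is $\min_{U \in \cU} f_{\hbeta^*}(U)$. Theorem~\ref{prim-conv} then yields $f_{\hbeta^*}(U_k) - g(X_{\hbeta^*}(U_k)) \to 0$, so the step~2) termination test is eventually satisfied, the algorithm halts, and Statement~1 delivers the desired $\epsilon$-optimality for \eqnok{smooth-saddle}.

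The main obstacle I foresee is the case analysis behind the invariant, and specifically the claim that branch 1b) cannot inadvertently shrink $\hbeta$ below $\lmax(\xhbuk)$ and so manufacture an active iterate on entry to step~2); this uses $\varsigma_2 > 1$, the lower bound $\xhbuk \succeq \alpha I$, and the eigenvalue formula in \eqnok{xfu}. Once the invariant is in place, both statements reduce to routine appeals (weak duality and Theorem~\ref{prim-conv}).
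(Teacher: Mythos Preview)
Your proposal is correct and follows essentially the same route as the paper. The paper establishes your ``invariant'' (inactivity of $\xhbuk$ at step~2), hence $f_{\hbeta}(U_k)=f(U_k)$) in the discussion immediately preceding the proposition (see \eqnok{fgbeta}), then proves Statement~1 via the same weak-duality observation and Statement~2 by the same reduction to SMACS on $(P_{\hbeta})$ together with Theorem~\ref{prim-conv}; your write-up simply makes the case analysis for branch~1b) more explicit than the paper does.
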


\begin{proof}
For the final iterate $(\xhbuk, U_k)$, we clearly know that $f_{\hbeta}(U_k) - g(\xhbuk) 
\le \epsilon$, and $\xhbuk$ is inactive. As shown in \eqnok{fgbeta}, $f(U_k) = 
f_{\hbeta}(U_k)$. Hence, we have $f(U_k)-g(\xhbuk) \le \epsilon$. We also know that 
$U_k \in \cU$, and $\xhbuk \in \cX$ due to $\hbeta \in [\alpha, \beta]$. Thus, 
statement 1) immediately follows. After the last update of $\hbeta$, the algorithm 
VSMACS behaves exactly like the algorithm SMACS as applied to solve problem 
$(P_{\hbeta})$ except with the termination criterion $f(U_k)-g(\xhbuk) \le \epsilon$. 
Thus, it follows from statement 1) and Theorem \ref{prim-conv} that statement 2) holds.   
\end{proof}

\section{Computational results}
\label{comp}

In this section, we compare the performance of the smooth minimization approach and 
its variant proposed in this paper with other first-order methods studied in 
\cite{DaOnEl06,FriHasTib07}, that is, Nesterov's smooth approximation 
scheme and block coordinate descent method for solving problem \eqnok{relax} 
(or equivalently, \eqnok{smooth-saddle}) on a set of randomly generated instances. 

All instances used in this section were randomly generated in the same manner 
as described in d'Aspremont et al.\ \cite{DaOnEl06}. First, we generate a sparse 
invertible matrix $A\in \cS^n$ with positive diagonal entries and a density 
prescribed by $\varrho$. 
We then generate the matrix $B\in\cS^n$ by
\[
B = A^{-1} + \tau V,
\]      
where $V\in\cS^n$ is an independent and identically distributed uniform random 
matrix, and $\tau$ is a small positive number. Finally, we obtain the following 
randomly generated sample covariance matrix:
\[
\Sigma = B - \min\{\lambda_{\min}(B)- \vartheta, 0\} I,   
\]
where $\vartheta$ is a small positive number. In particular, we set $\varrho=0.01$, 
$\tau=0.15$ and $\vartheta=1.0e-4$ for generating all instances. 

As discussed in Section \ref{appl}, our smooth minimization approach has much 
better worst-case iteration complexity than Nesterov's smooth approximation 
scheme studied in d'Aspremont et al.\ \cite{DaOnEl06} for problem \eqnok{smooth-saddle}. 
However, it is unknown how their practical performance differs from each other.  
In the first experiment, we compare the practical performance of 
our smooth minimization approach and its variant with Nesterov's smooth approximation 
scheme studied in d'Aspremont et al.\ \cite{DaOnEl06} for problem \eqnok{smooth-saddle} 
with $\alpha=0.1$, $\beta=10$ and $\rho=0.5$. For convenience of presentation, we label 
these three first-order methods as SM, VSM, and NSA, respectively. The codes for them 
are written in Matlab. More specifically, the code for NSA follows the algorithm 
presented in d'Aspremont et al.\ \cite{DaOnEl06}, and the codes for SM and VSM are 
written in accordance with the algorithms SMACS and VSMACS, respectively. Moreover, 
we set $\varsigma_1= \varsigma_2 = 1.05$ and $\varsigma_3=0.95$ for the algorithm 
VSMACS. These three methods terminate once the duality gap is less than $\eps=0.1$. 
All computations are performed on an Intel Xeon 2.66 GHz machine with Red Hat Linux 
version 8. 

The performance of the methods NSA, SM and VSM for the randomly generated instances 
are presented in Table \ref{result-1}. The row size $n$ of each sample covariance matrix  
$\Sigma$ is given in column one. The numbers of iterations of NSA, SM and VSM 
are given in columns two to four, and the objective function values are given in columns 
five to seven, and the CPU times (in seconds) are given in the last three columns, 
respectively. From Table \ref{result-1}, we conclude that: i) the method SM, namely, 
the smooth minimization approach, outperforms substantially  the method NSA, that is, 
Nesterov's smooth approximation scheme; and ii) the method VSM, namely, the variant of 
the smooth minimization approach, substantially outperforms the other two methods. In 
addition, we see from this experiment that Nesterov's smooth minimization approach 
\cite{Nest83-1} is generally more appealing than his smooth approximation scheme 
\cite{Nest05-1} whenever the problem can be solved as an equivalent smooth problem. 
Nevertheless, we shall mention that the latter approach has much wider field of 
application (e.g., see \cite{Nest05-1}), where the former approach cannot be 
applied.    

\begin{table}[t]
\caption{Comparison of NSA, SM and VSM}
\centering
\label{result-1}
\begin{small}
\begin{tabular}{|c||rrr||rrr||rrr|}
\hline 
\multicolumn{1}{|c||}{Problem} & \multicolumn{3}{c||}{Iter} &  
\multicolumn{3}{c||}{Obj} & \multicolumn{3}{c|}{Time} \\
\multicolumn{1}{|c||}{n} & \multicolumn{1}{c}{\sc nsa} 
& \multicolumn{1}{c}{\sc sm} & \multicolumn{1}{c||}{\sc vsm}  
& \multicolumn{1}{c}{\sc nsa} & \multicolumn{1}{c}{\sc sm} 
& \multicolumn{1}{c||}{\sc vsm} & \multicolumn{1}{c}{\sc nsa} 
& \multicolumn{1}{c}{\sc sm} & \multicolumn{1}{c|}{\sc vsm} \\
\hline
50 & 3657 & 457 & 20  & -76.399 & -76.399  & -76.393  & 49.0 & 2.7 & 0.1 \\
100 & 7629 & 920 & 27 & -186.717 & -186.720 & -186.714 & 900.4 & 38.4 & 0.4 \\
150 & 20358 & 1455 & 49 & -318.195 & -318.194 & -318.184 & 8165.7 & 188.8 & 2.0 \\
200 & 27499 & 2294 & 102 & -511.246 & -511.245 & -511.242 & 26172.5 & 698.8 & 9.2 \\
250 & 45122 & 3060 & 128 & -3793.255 & -3793.256 & -3793.257 & 87298.9 & 1767.9 & 19.8 \\
300 & 54734 & 3881 & 161 & -3187.163 &  -3187.171 & -3187.172 & 184798.1 & 3994.0 & 45.5 \\
350 & 64641 & 4634 & 182 & -2756.717 & -2756.734  & -2756.734 & 351460.7 & 7613.9 & 83.6 \\
400 & 74839 & 5308 & 176 & -3490.640 & -3490.667  & -3490.667 & 614237.1 & 13536.7 & 116.9 \\
\hline
\end{tabular}
\end{small}
\end{table}

 From the above experiment, we have already seen that the method VSM outperforms 
substantially two other first-order methods, namely, SM and NSA for solving 
problem \eqnok{smooth-saddle}. In the second experiment, we compare the 
performance of the method VSM with the block coordinate descent (BCD) methods 
studied in d'Aspremont et al.\ \cite{DaOnEl06} and Friedman et al.\ 
\cite{FriHasTib07} on relatively large-scale instances. For convenience of 
presentation, we label these two methods as BCD1 and BCD2, respectively. The 
method BCD2 was developed very recently and it is a slight variant of the method 
BCD1. In particular, each iterate of BCD1 solves a box constrained quadratic 
programming by means of interior point methods, but each iterate of BCD2 applies 
a coordinate descent approach to solving a lasso ($l_1$-regularized) least-squares 
problem, which is the dual of the box constrained quadratic programming appearing 
in BCD1. It is worth mentioning the methods BCD1 and BCD2 are only applicable 
for solving problem \eqnok{relax} with $\talpha=0$ and $\tbeta=\infty$. Thus, we 
only compare their performance with our method VSM for problem \eqnok{relax} with such 
$\talpha$ and $\tbeta$. As shown in Subsection \ref{reform}, problem \eqnok{relax} 
with $\talpha=0$ and $\tbeta=\infty$ is equivalent to problem \eqnok{smooth-saddle} 
with $\alpha$ and $\beta$ given in \eqnok{alpha}, and hence it can be solved by 
applying the method VSM to the latter problem instead. 

The code for the method BCD1 was written in Matlab by d'Aspremont and El Ghaoui 
\cite{COVSEL06} while the code for BCD2 was written in Fortran 90 by Friedman 
et al.\ \cite{FriHasTib07-1}. The methods BCD1 and VSM terminate once the duality 
gap is less than $\eps=0.1$. The original code \cite{FriHasTib07-1} for BCD2 uses 
the average absolute change in the approximate solution as the termination criterion. 
In particular, the average absolute change in the approximate solution is evaluated 
at the end of each cycle consisting of $n$ block coordinate descent iterations, 
and their code terminates once it is below a given accuracy (see pp.\ 6 of \cite{FriHasTib07} 
for details). According to our computational experience, we found with such a  
criterion, BCD2 is extremely hard to terminate for relatively large-scale 
instances (say $n=300$) unless a maximum number of iterations is set. Obviously, 
it is not easy to choose a suitable maximum number of iterations for BCD2. Thus, 
to be as fair as possible to BCD1 and VSM, we simply replace their termination 
criterion detailed in \cite{FriHasTib07-1} for BCD2 by the one with the duality 
gap less than $\eps=0.1$. In other words, the duality gap is computed at the 
end of each cycle consisting of $n$ block coordinate descent iterations, and 
BCD2 terminates once it is below $\eps=0.1$. It is worth remarking that the 
cost for computing a duality gap is ${\cal O}(n^3)$ since the inverse of an 
$n \times n$ symmetric matrix is needed. Thus, it is reasonable to compute 
duality gap once every $n$ iterations rather than each iteration. 

All computations are performed on an Intel Xeon 2.66 GHz machine with Red Hat 
Linux version 8. The performance of the methods BCD1, BCD2 and VSM for the 
randomly generated instances are presented in Table \ref{result-2}. The row 
size $n$ of each sample covariance matrix  $\Sigma$ is given in column one. 
The numbers of iterations of BCD1, BCD2 and VSM are given in columns two to 
four, and the objective function values are given in columns five to eight, 
and the CPU times (in seconds) are given in the last three columns, 
respectively. From Table \ref{result-2}, we conclude both BCD2 and VSM 
substantially outperform BCD1. We also observe that our method VSM 
outperforms BCD2 for almost all instances except two relatively small-scale 
instances.    

\begin{table}[t]
\caption{Comparison of BCD1, BCD2 and VSM}
\centering
\label{result-2}
\begin{small}
\begin{tabular}{|c||rrr||rrr||rrr|}
\hline 
\multicolumn{1}{|c||}{Problem} & \multicolumn{3}{c||}{Iter} &  
\multicolumn{3}{c||}{Obj} & \multicolumn{3}{c|}{Time} \\
\multicolumn{1}{|c||}{n} & \multicolumn{1}{c}{\sc bcd1} 
& \multicolumn{1}{c}{\sc bcd2} & \multicolumn{1}{c||}{\sc vsm} 
& \multicolumn{1}{c}{\sc bcd1} & \multicolumn{1}{c}{\sc bcd2} 
& \multicolumn{1}{c||}{\sc vsm} & \multicolumn{1}{c}{\sc bcd1} 
& \multicolumn{1}{c}{\sc bcd2} & \multicolumn{1}{c|}{\sc vsm} \\
\hline
100 & 124 & 200 & 33 & -186.522 & -186.433 & -186.522  & 22.3 & 0.1 & 0.5 \\
200 & 531 & 600 & 109 & -449.210 & -449.179 & -449.209 & 300.0 & 1.3 & 9.5 \\
300 & 1530 & 1500 & 146 & -767.615 & -767.608 & -767.614 & 2428.2 & 80.9 & 48.5 \\
400 & 2259 & 2400 & 154 & -1082.679 & -1082.651 & -1082.677 & 8402.4 & 298.7 & 112.3 \\
500 & 3050 & 3500 & 154 & -1402.503 & -1402.457 & -1402.502 & 22537.1 & 640.2 & 211.5 \\
600 & 3705 & 4200 & 165 & -1728.628 & -1728.587 & -1728.627 & 48950.4 & 1215.0 & 397.6 \\
700 & 4492 & 4900 & 163 & -2057.894 & -2057.862 & -2057.892 & 92052.7 & 1972.5 & 611.1 \\
800 & 4958 & 5600 & 169 & -2392.713 & -2392.671 & -2392.712 & 147778.9 & 2872.3 & 943.2 \\
900 & 5697 & 6300 & 161 & -2711.874 & -2711.827  & -2711.874 & 219644.3 & 3593.7 & 1268.5 \\
1000 & 6536 & 7000 & 161 & -3045.808 & -3045.768 & -3045.808 & 344687.8 & 6098.7 & 1710.0 \\
\hline
\end{tabular}
\end{small}
\end{table}

In the above experimentation, we compared the performance of BCD2 and VSM for 
$\eps=0.1$. We next compare their performance on the same instances as above and 
apply the same termination criterion as above except that we set up $\eps=0.01$ and 
an upper bound of $20$ hours computation time (or $72,000$ seconds) per instance 
for both codes. The performance of the methods BCD2 and VSM are presented in Table 
\ref{result-3}. The row size $n$ of each sample covariance matrix  $\Sigma$ is 
given in column one. The numbers of iterations of BCD2 and VSM are given in 
columns two to three, and the objective function values are given in columns 
four to five, and the CPU times (in seconds) are given in the last two columns, 
respectively. It shall be mentioned that BCD2 and VSM are both feasible methods, 
and moreover, \eqnok{relax} and \eqnok{smooth-saddle} are maximization problems. 
Therefore for these two methods, the larger objective function value, the better. 
From Table \ref{result-3}, we observe that up to accuracy $\eps=0.01$, the method 
BCD2 cannot solve almost all instances within $20$ hours except the first three 
relatively small-scale ones, but our method VSM does solve each of these instances 
in less than one hour and produces a better objective function values for almost 
all instances except the first three relatively small-scale ones. Also, it is 
interesting to observe that the number of iterations for VSM nearly doubles as 
the accuracy parameter $\eps$ increases by one digit, which is even better than 
the theoretical estimate that is $\sqrt{10}$ according to Theorem \ref{mtm-sparcov}.  

\begin{table}[t]
\caption{Comparison of BCD2 and VSM}
\centering
\label{result-3}
\begin{small}
\begin{tabular}{|c||rr||rr||rr|}
\hline 
\multicolumn{1}{|c||}{Problem} & \multicolumn{2}{c||}{Iter} &  
\multicolumn{2}{c||}{Obj} & \multicolumn{2}{c|}{Time} \\
\multicolumn{1}{|c||}{n} & \multicolumn{1}{c}{\sc bcd2} 
& \multicolumn{1}{c||}{\sc vsm} & \multicolumn{1}{c}{\sc bcd2} 
& \multicolumn{1}{c||}{\sc vsm} & \multicolumn{1}{c}{\sc bcd2} 
& \multicolumn{1}{c|}{\sc vsm} \\
\hline
100 & 200 & 54 & -186.433 & -186.435  & 0.1 & 0.77 \\
200 & 1200 & 239 & -449.119 & -449.122 & 2.1 & 21.6 \\
300 & 3000 & 310 & -767.525 & -767.525 & 32.1 & 104.2 \\
400 & 11778400 & 321 & -1082.592 & -1082.589 & 72000.0 & 223.3 \\
500 & 6997000 & 309 & -1402.420& -1402.413 &  72001.0 & 395.5 \\
600 & 4637400 & 318 & -1728.553 & -1728.538 & 72004.0 & 765.2 \\
700 & 3215100 & 310 & -2057.823 & -2057.804 & 72005.0 & 1330.0 \\
800 & 2307200 & 309 & -2392.644 & -2392.623 & 72003.0 & 1789.2 \\
900 & 1846800 & 289 & -2711.806 & -2711.784 & 72024.0 & 2394.0 \\
1000 & 1257000 & 283 & -3045.749 & -3045.718 & 72051.0 & 3115.8 \\
\hline
\end{tabular}
\end{small}
\end{table}

\section{Concluding remarks}
\label{concl-remark}

In this paper, we proposed a smooth optimization approach for solving a 
class of non-smooth strictly concave maximization problems. We also discussed 
the application of this approach to sparse covariance selection, and proposed a variant 
of this approach. The computational results showed that the variant of the smooth 
optimization approach outperforms substantially the latter one, and two other 
first-order methods studied in d'Aspremont et al.\ \cite{DaOnEl06} and Friedman 
et al.\ \cite{FriHasTib07}.   

As discussed in Subsection \ref{appl}, problem \eqnok{smooth-saddle} has the same 
form as \eqnok{concave-opt}, and satisfies all assumptions imposed on problem 
\eqnok{concave-opt}. Moreover, its associated objective function 
$\phi(X,U)=\log\det X - \langle \Sigma + \rho U, X \rangle$ is affine with respect 
to $U$ for every fixed $X \in \cS^n_{++}$. In view of these facts along with the 
remarks made in Section \ref{smooth-appr}, one can observe that problem 
\eqnok{smooth-saddle} can be suitably solved by Nesterov's excessive gap technique 
\cite{Nest05-2}. Since the iterate complexity and the computational cost 
per iterate of this technique is same as those of the algorithm SMACS, we expect that the 
computational performance of these two methods for solving \eqnok{smooth-saddle} are 
similar. It would be interesting to implement Nesterov's excessive gap technique 
\cite{Nest05-2} and its variant (that is, the one in a similar fashion to 
the algorithm VSMACS), and compare their computational performance with 
SMACS and VSMACS, respectively. 


Though the variant of the smooth optimization approach outperforms substantially 
the smooth optimization approach, we are currently only able to establish 
some preliminary convergence properties for it. A possible direction leading to 
a thorough proof of its convergence would be to show that the updates on $\hbeta$ 
in the algorithm VSMACS can occur only for a finite number of times. Given that 
VSMACS is a nonmonotone algorithm, it is, however, highly challenging to analyze 
the behavior of the sequences $\{U_k\}$ and $\{X_{\hbeta}(U_k)\}$ and hence the 
total number of updates on $\hbeta$. Interestingly, we observed in our implementation 
that when $\hbeta > \lambda_{\max}(X^*)$, the sequence $\{X_{\hbeta}(U_k)\}$ 
generated by the algorithm VSMACS satisfies $\lambda_{\max}(X_{\hbeta}(U_k)) 
\in [\lambda_{\max}(X^*), \hbeta)$, where $X^*$ is the optimal solution of problem 
\eqnok{smooth-saddle}. Nevertheless, it remains completely open whether this holds in 
general or not.  In addition, the ideas used in the variant of the smooth 
optimization approach are interesting in their own right even when viewed as 
some heuristics. They could also be used to enhance the practical performance 
of Nesterov's first-order methods \cite{Nest83-1,Nest05-1} for solving some 
general min-max problems.      

The codes for the variant of the smooth minimization approach 
are written in Matlab and C, which are available online at www.math.sfu.ca/$\sim$zhaosong. 
The C code for this method can solve large-scale problems more efficiently provided that 
LAPACK package is suitably installed. We will plan to extend these codes for 
solving more general problems of the form 
\[
\begin{array}{ll}
\max\limits_X & \log\det X - \langle \Sigma, X \rangle - \sum\limits_{ij} 
\omega_{ij} |X_{ij}|  \\ 
\mbox{s.t.} & \talpha I \preceq X \preceq \tbeta I, \\ [4pt]
            & X_{ij} = 0, \ \forall (i,j) \in \Omega,  
\end{array}   
\]
for some set $\Omega$, where $\omega_{ij}=\omega_{ji} \ge 0$ for all $i,j=1,\ldots,n$, 
and $0 \le \talpha < \tbeta \le \infty$ are some fixed bounds on the eigenvalues of the 
solution. 
 
\section*{Acknowledgement}
The author would like to thank Professor Alexandre d'Aspremont for a careful 
discussion on the iteration complexity of Nesterov's smooth approximation scheme 
for sparse covariance selection given in \cite{DaOnEl06}. Also, the author is in debt 
to two anonymous referees for numerous insightful comments and suggestions, 
which have greatly improved the paper.

\end{document}